\documentclass[journal]{IEEEtran}
\ifCLASSINFOpdf
\else
\fi
\usepackage{subfigure} 
\usepackage{graphicx}
\usepackage{amssymb}
\usepackage{amsmath}
\usepackage{color} 
\usepackage{cite} 
\newtheorem{theorem}{Theorem}
\newtheorem{corollary}{Corollary}
\newtheorem{lemma}{Lemma}
\newtheorem{definition}{Definition}
\newtheorem{assumption}{Assumption}
\newtheorem{remark}{Remark}
\newtheorem{proof}{Proof}
\newtheorem{problem}{Problem}
\newtheorem{proposition}{Proposition}
\usepackage{tikz,xcolor,hyperref}
\definecolor{lime}{HTML}{A6CE39}
\DeclareRobustCommand{\orcidicon}{%
	\begin{tikzpicture}
		\draw[lime, fill=lime] (0,0) 
		circle [radius=0.16] 
		node[white] {{\fontfamily{qag}\selectfont \tiny ID}}; 
		\draw[white, fill=white] (-0.0625,0.095) 
		circle [radius=0.007];	  
	\end{tikzpicture}
	\hspace{-2mm}}
\foreach \x in {A, ..., Z}{%
	\expandafter\xdef\csname orcid\x\endcsname{\noexpand\href{https://orcid.org/\csname orcidauthor\x\endcsname}{\noexpand\orcidicon}}
}
\makeatother

\begin{document}
	\title{{Distributed Power Allocation Scheme with Prescribed Performance and Intermittent Dynamics for BESSs with Discharge Rate Constraints in Microgrids}}
	\author{Yalin Zhang\orcidA{},
		Zhongxin Liu\orcidB{},~\IEEEmembership{Member,~IEEE,}
		Fuyong Wang\orcidC{},
		and Zengqiang Chen\orcidD{}
		\thanks{Manuscript received XX, XX; revised XX, XX;
			accepted XX, XX. This work is supported by the National Natural Science Foundation of China (Grant Nos. 62103203 and 92367105) and the General Terminal IC Interdisciplinary Science Center of Nankai University (Corresponding author:  Fuyong Wang.). 
			\par The authors are with the College of Artificial Intelligence, Nankai University, Tianjin 300350, and also with the Tianjin Key Laboratory of Interventional Brain-Computer Interface and Intelligent Rehabilitation, Nankai University, Tianjin 300350, China (e-mail: zhangyl@mail.nank.edu.cn; lzhx@nankai.edu.cn; wangfy@nankai.edu.cn; chenzq@nankai.edu.cn).
		}
	}
	\markboth{IEEE TRANSACTIONS ON SYSTEMS, MAN, AND CYBERNETICS: SYSTEMS, VOL. XX, NO. XX, XX XX}%
	{Zhang \MakeLowercase{\textit{et al.}}: \textcolor{blue}{Distributed Power Allocation Scheme with Prescribed Performance and Intermittent Dynamics for BESSs with Discharge Rate Constraints in Microgrids}}
	\maketitle
	\begin{abstract}
		The State-of-Charge (SoC) is an important parameter of a battery energy storage system (BESS), and its balance problem is also an issue worth studying in a multi-BESS network. Recently, some researchers have proposed a power allocation method, claiming that as long as the power sharing state and SoC balance state can be obtained in real-time, it can not only maintain supply and demand balance, but also ensure that any BESS will not exit early due to insufficient energy storage. Considering this, we are attempting to design {a distributed dynamic average tracking algorithm} based on multi-agent systems (MASs) with prescribed transient and steady state performance to estimate the power sharing and SoC balance states \textcolor{blue}{of} a BESSs network with dynamic load in a distributed manner. Two versions of the solution are designed here, where one is event triggered and the other is self triggered. These two schemes ensure the performance of the estimators under intermittent communication to varying degrees, \textcolor{blue}{respectively}. In each constructed estimation scheme, prescribed performance control (PPC) method is implemented to ensure the expected steady-state and dynamic performance, which is encapsulated in a performance function. Thus, it achieves almost zero error estimation of the fast time-varying {power} sharing and the SoC balance states. In addition, consensus and average tracking performance are decoupled, which provides convenience for \textcolor{blue}{parameters} tuning. Finally, to verify the results of \textcolor{blue}{the} theoretical analysis, some cases are studied and relevant simulations are performed on a 4 bus system.
	\end{abstract}
	\begin{IEEEkeywords}
		Battery energy storage system, multi-agent systems, {distributed dynamic average tracking}, prescribed performance, power allocation.
	\end{IEEEkeywords}
	\IEEEpeerreviewmaketitle
	\begin{center}
		N\footnotesize{OMENCLATURE}
	\end{center}
	\normalsize
	\begin{tabbing}
		\hspace{2cm} \= \kill
		BESS\> battery energy storage system\\
		SoC\>  the State-of-Charge\\
		PPC\> prescribed performance control\\
		MAS\> multi-agent system\\
		power\> proportional output power\\
		{DAT}\> {dynamic average tracking}\\
		$E_i$, $I_i$\> SoC and output current\\
		$\mathrm{C}_i$, $V_i$\> the capacity and terminal voltage\\
		$\eta_{i,1}$\>the Coulomb efficiency\\
		$\eta_{i,2}$\>the charging/discharging efficiency\\
		$\mathrm{C}_i$, \textcolor{blue}{$\mathrm{C}_{i,m}$}\>the current and \textcolor{blue}{the} rated capacities\\
		$\mathrm{T}$\>the time scale\\
		\textcolor{blue}{$f_i$}\>the current number of cycles\\
		$\tilde P_i$, $P_i$\> output power and its proportional value\\
		$\mathrm{K}_i^E$\> a constant defined as \textcolor{blue}{$\mathrm{K}_i^E=\frac{\eta_{i,1}}{\eta_{i,2}\mathrm{C}_iV_i}$}\\
		$D_i$\> the proportional load\\
		$P_a$, $E_a$\> the average power and \textcolor{blue}{the average} SoC\\
		$\hat P_{a,i}$\> the estimated average power\\
		$\hat E_{a,i}$\> \textcolor{blue}{the estimated average} SoC \\
		$\epsilon_E$, $\epsilon_P$\> the pre-specified values\\
		$\cal G$\> communication graph\\
		$\cal V$, $\cal E$\> vertex set and edge sets\\
		$\cal A$\> the adjacency matrix of $\cal G$\\
		${\cal N}_i$\> the neighbor set \textcolor{blue}{agent $i$}\\
		{$\rho^E_{ij}$}\>{the performance function of SoC balance error}\\
		{\textcolor{blue}{$\rho_{ij}^P$}, $\rho_l$, $R$}\>the performance functions\\
		$\rho_0$, $\rho_{\infty}$, $\lambda$\> the initial, final values and decay rate of $\rho(t)$\\
		$\xi^P$, $\xi^E$\> modulated error vector forms\\
		$T^P$, $T^E$\> bijective mapping vector forms \textcolor{blue}{of $\xi^P$ and $\xi^E$}\\
		$J_T^P$, $J_T^E$\> the Jacobian derivative of $T^P$ \textcolor{blue}{and} $T^E$\\
		$\bar{\epsilon}$, $K$, $F$\> \textcolor{blue}{several constants} defined in Lemma \ref{lepre}\\
		$\epsilon^*$, $\xi^*$ \> \textcolor{blue}{several constants} defined in the proof\\
		$\mathrm{a}_1$, $\mathrm{a}_2$\> maximum and minimum allowed SoC values\\
		$z_i^P$, $z_i^E$\> the intermediate states of the two estimators\\
		$\delta_l^P$\> consensus errors for estimation\\
		$\delta^P$\> consensus error vectors of \textcolor{blue}{$\delta_l^P$}\\
		$\Omega_\xi$, $\Omega_\xi^{'}$\> an open and nonempty set and its subset\\
		$V$, $\dot V$\> a Lyapunov function and its derivative\\
		$e^P$\> a tracking error vector\\
		$\mathrm{m}$, $\mathrm{n}$\> numbers of communication links and agents\\
		$l$, $i$\> the index of communication links and agents\\
		{$\mathrm{k}^P$, $\mathrm{k}^P_r$}\> {the gains of the power sharing state estimator}\\
		{$\mathrm{k}^E$, $\mathrm{k}^E_r$}\> {the gains of the SoC balance state estimator}\\
	\end{tabbing}
	\section{Introduction}
	\IEEEPARstart{A}{s} the main source of energy for centralized power generation, fossil fuels such as coal, used for power generation in power plants, are extensively mined and used, causing global pollution and greenhouse effect. Nowadays, the difficulty of extracting traditional energy has increased, coupled with the damage to the environment, making it necessary for people to make new choices. Thus, more types and proportions of renewable energy in distributed manner are utilized in \textcolor{blue}{microgrids} \cite{HossainLipu2022, ZHAO2014538, solyaliComprehensiveStateoftheartReview2022}. In practical application, renewable energy generation has the characteristics of randomness and intermittence, which brings that microgrids often cannot provide reliable and stable power \cite{HossainLipu2022,ZHAO2014538}, {as shown in Fig. \ref{fige}}. Recently, energy storage system is usually introduced into microgrid,
	which is helpful to improve power supply quality and suppress peak valley differences \cite{ZHAO2014538, solyaliComprehensiveStateoftheartReview2022}. In all kinds of energy storage devices, battery energy storage system (BESS) has strong advantages in usage scenario, response speed, quality and efficiency \cite{Yang2022,Shinichi2017,Sokol2017}. As a result, an increasing proportion of BESSs are integrated into the smart grid. Herein above statement, cooperative management/control of BESSs becomes a problem worthy of in-depth study.
	\begin{figure}
		\centering
		\includegraphics[width=8cm]{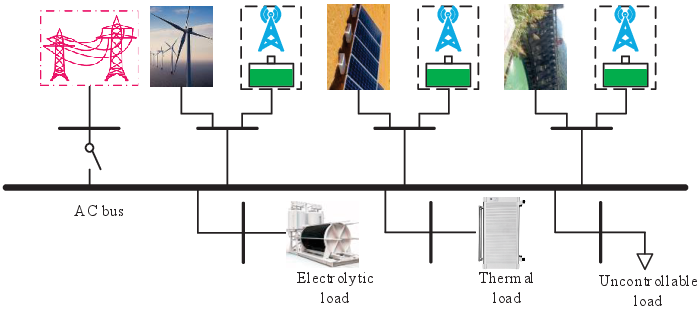} \caption{{Architecture of isolated MGs with BESSs.}\label{fige}}
	\end{figure}
	\par For a battery cell, the State of Charge (SoC) is an importantparameter that measures its current energy level \cite{Zeng2022l,Wang2020d,Litjens2018}. For parallel battery packs, SoCs \textcolor{blue}{of} all battery cells need to be balanced to maximize battery capacity utilization \cite{Zeng2022l,Litjens2018}. In addition, achieving SoC balance often extends the lifespan of battery packs \cite{Wang2020d,Litjens2018}. Therefore, how to reasonably arrange power flow to achieve SoC balance is a fundamental issue for achieving the cooperative control/management objectives of BESSs.
	\par Thus, a large number of excellent literature has been committed to solving the above problems. Both distributed and centralized manners have been investigated. For example, SoC of each battery was monitored by a centralized controller in \cite{Baveja2023}, while external balancing circuit was applied to realize SoC balance. Besides, a centralized hierarchical SoC balancing controller was developed in \cite{Cao2021} to achieve current sharing control and SoC balancing control. However, the external balancing circuit was going to consume some energy. In practice, \textcolor{blue}{a} centralized controller was more expensive, and single-point failures are unavoidable. Instead, distributed control could enhance the robustness and scalability of the system.
	\par In view of this, the research on distributed control \textcolor{blue}{schemes} for BESSs based on multi-agent systems (MASs) has attracted great attention and achieved some excellent results. SoC balance \cite{Cai2016,Khazaei2019a,Ding2020,Hu2019,Nguyen2021a,Zhang2020} and power/current sharing \cite{Khazaei2019a,Ding2020,Hu2019,Nguyen2021a,Zhang2020} with constant load were studied in distributed control frameworks respectively. The relevant authors designed proportional control protocols \cite{Cai2016,Khazaei2019a,Nguyen2021a,Zhang2020}, finite time/fixed time control protocols \cite{Ding2020,Hu2019} and resilient control protocol \cite{Ding2020} in these papers in order to accelerate system convergence and improve anti-interference ability. Besides, other methods, e.g., model-free control \cite{Hong2021d} and sliding mode control \cite{Zhang2019e}, were also applied to reach SoC balance. These methods were used to achieve SoC consensus by proportionally distributing a constant load to solve SoC balancing problem and demonstrate good performance. {However, these solutions are inadequate in solving the SoC balancing problem of BESS networks with dynamic loads, exhibiting defects such as poor control accuracy and supply-demand imbalance. In addition, these solutions come at the cost of changing some BESS working modes in the network, which, although achieving SoC balance, prompt these BESSs to advance to the next work cycle. All of these forces us to take a different path.}
	\par Recently, some latest developments have emerged in \cite{Xing2019,Meng2021,Meng2022,Qian2023,Wu2022} {to solve the SoC balancing problem under time-varying loads}. The authors believed in \cite{Xing2019} that all BESSs should {output power according to the current SoC, and based on this, provided a power allocation rule. To ensure supply-demand balance and simultaneous depletion of all BESSs, it is necessary to design two estimators to obtain real-time power sharing status and SoC balance status}. \textcolor{blue}{Considering} this, the authors in \cite{Meng2021} designed two {dynamic average tracking (DAT)} schemes {to achieve this goal, where} one was {a asymptotic scheme with controllable steady-state error}, and the other {was a} finite time scheme. Similarly, the authors further improved this method in \cite{Meng2022} and used adaptive methods to estimate battery parameter. Furthermore, in \cite{Qian2023}, the authors designed a scheme with a node-based event triggering mechanism to estimate the state of the battery, thereby saving communication resources. Additionally, a distributed SoC balance scheme based on discrete dynamics was designed in \cite{Wu2022} to reach relative SoC variation rates consensus. These results {are} found to solve the SoC balancing problem of {BESSs with a unified discharge rate under dynamic load}. And these authors either focused on the consensus performance \cite{Xing2019,Wu2022} of the designed algorithm or on the traking performance \cite{Meng2021,Meng2022,Qian2023}, which was because these two types of performance were often coupled, and adjusting the gain made one better and the other worse. {In addition, prior knowledge of dynamic loads was required in \cite{Meng2021,Meng2022,Qian2023} and could affect estimation accuracy, which might require a suitable load prediction mechanism to complete.}
	\par\textcolor{blue}{Since some researchers proposed the prescribed performance control (PPC) method \cite{Bechlioulis2008, Bechlioulis2009, Bechlioulis2010}, its application research in collaborative control of MASs has been gradually expanded. At present, PPC is applied in MASs to solve problems such as consensus \cite{10101860,10428061,10578027,10398255,10711307}/binary consensus \cite{10418189} control, formation \cite{10198705} and tracking control \cite{10418189}, fault-tolerant control \cite{10486878,10711307}, etc. These schemes combine finite \cite{10198705}/fixed \cite{10486878,10418189,10398255} time control, event triggered control \cite{10428061,10578027,10398255}, adaptive control \cite{10418189,10398255}, and reinforcement learning \cite{10486878}, etc, to accelerate convergence rate, save communication resources, approximate nonlinearity, and improve robustness. These results fully demonstrate that PPC has shone in distributed collaborative control of MASs, promoting the evolution of control errors within predefined ranges. There are also some fragility-free improvements \cite{9930639,BJF2022} made in PPC schemes to address actuator saturation and interference. However, for distributed DAT control schemes used to solve SoC balancing and power sharing problems, existing results have not been introduced into PPC to improve system performance.}
	\par {In order to design a DAT algorithm that decouples consensus and tracking performance to solve power allocation and SoC balancing problems under dynamic loads, we are going to design two distributed estimators with PPC and the intermittent execution\textbackslash communication to obtain the real-time SoC balance and power sharing states of a multi-BESSs network with power allocated according to the current SoC level}. 
	Concisely, the main contributions in this paper are listed as follows.
	\begin{enumerate}	
		\item \textcolor{blue}{The PPC method is introduced into a distributed DAT scheme to estimate the SoC balance state and the power sharing state, which makes that the estimation error of each agent evolves predetermined and arbitrary convergence rates}. Unlike \cite{Meng2021,Meng2022,Qian2023,Wu2022}, this is not achieved by adjusting the control gains, but through a prescribed performance function.
		\item The designed scheme does not require prior knowledge of the dynamic load to achieve average tracking. In addition, the designed scheme decouples consensus performance and tracking performance to select the appropriate gains. These are different from \cite{Meng2021,Meng2022,Qian2023,Wu2022}.
		\item An event and a self triggering mechanisms \textcolor{blue}{are designed}. {The designed event triggering mechanism allows any pair of neighboring agents to trigger asynchronously which is different from \cite{9770469}.} Using a conservative upper limit of measurement error as a substitute to construct a self triggering mechanism to achieve intermittent communication.
	\end{enumerate}
	\par The following contents are arranged as follows. \textcolor{blue}{In} Section II, a general description of SoC, a discharge rate constraint of BESSs and the common control objectives \textcolor{blue}{are introduced}. \textcolor{blue}{Next, in} Section III, the preliminaries of the graph theory, PPC and dynamical systems \textcolor{blue}{are  given}. \textcolor{blue}{Subsequently,} two distributed estimators and their proof for power sharing and balance states \textcolor{blue}{are designed in Section IV}. \textcolor{blue}{In addition,} several cases \textcolor{blue}{are presented in Section VI} to test the proposed estimators. \textcolor{blue}{Finally, this paper is summarized in} Section VII.
	\section{Problem statement}
	In this section, a power allocation scheme with a unified SoC relative rate of variation is introduced, which can ensure \textcolor{blue}{time-oriented} SoC balance and supply-demand balance. Based on this, the purpose of this article is to design two \textcolor{blue}{distributed} estimators that can ensure predetermined dynamic and steady-state performance.
	{\subsection{SoC of A BESS}}
	\par For a BESS, the current energy level of {the} battery {can be characterized by} SoC, and its variation rate is related to the output current of the battery, i.e., for BESS $i$ in a cycle,
	$$\dot{E}_i=-\frac{\eta_{i,1}}{\eta_{i,2}\mathrm{C}_i\mathrm{T}}I_i,$$
	where $E_i$, $\eta_{i,1}$, $\eta_{i,2}$, $\mathrm{C}_i$, and $\textcolor{blue}{I_i}$ {are}, respectively, SoC,
	Coulomb efficiency, charging/discharging efficiency, capacity, and output current of BESS $i$, $\eta_{i,1}$ \textcolor{blue}{and} $\eta_{i,2}$ are constants, $\mathrm{T}$ denotes \textcolor{blue}{a} time scale. Considering that the output power of the battery unite of BESS $i$ is calculated by $\tilde{P}_{i}=V_{i}{I}_{i}$, where $V_i$ is the terminal voltage of battery $i$, the rate-of-change of $E_i$ is related to the output power of the battery \cite{Cai2016,Khazaei2019a,Ding2020,Hu2019,Nguyen2021a,Zhang2020}, \cite{Zhang2019e,Xing2019,Meng2021,Meng2022,Wu2022}. That is, for \textcolor{blue}{BESS} $i$ and $i\in\{1,2,\cdots,\mathrm{n}\}$,
	$$\dot{E}_i=-\frac{\mathrm{K}_i^E}{\mathrm{T}}\tilde{P}_i,$$
	where $\mathrm{K}_i^E=\frac{\eta_{i,1}}{\eta_{i,2}\mathrm{C}_iV_i}$ is considered a parameter that characterizes the capacity of BESS $i$, {$V_i$ can be kept constant by adjusting the duty cycle of the DC chopper. In view of this,} \textcolor{blue}{one can get}
	\begin{equation}
		\label{1}
		\dot{E}_i=-\frac{1}{\mathrm{T}}P_i,
	\end{equation}
	where $P_{i}=\mathrm{K}_{i}^{\hat{E}}\tilde{P}_{i}$ is \textcolor{blue}{the} proportional output power of \textcolor{blue}{BESS} $i$, \textcolor{blue}{which is} so-called output power and hereinafter referred to as power. Here, it is believed that SoC satisfies the following assumption.
	\begin{assumption}
		To ensure the safety of the battery and extend its lifespan, SoC of BESS $i$ for $i\in\{1,2,\cdots,{\mathrm{n}}\}$ is often bounded by
		$$\mathrm{a}_1\leq E_i\leq\mathrm{a}_2,$$
		where $\mathrm{a}_1$ and $\mathrm{a}_2$ are positive constants and $0<\mathrm{a}_1<\mathrm{a}_2<\eta_{i,1}^f${, and $\eta_{i,1}^f$ is the current number of cycles of \textcolor{blue}{BESS} $i$}.
	\end{assumption}
	\par {The actual capacity of \textcolor{blue}{BESS} $i$ after $f_i$ cycles is
		$$\mathrm{C}_i=\mathrm{C}_{i,m}\eta_{i,1}^{f_i},$$	
		where $\mathrm{C}_{i,m}$ is the rated capacity of \textcolor{blue}{BESS} $i$. It can be seen from here that as the number of cycles increases, the battery capacity continuously decreases.}
	{\subsection{A Power Allocation Scheme for BESSs}}
	\par {For droop-controlled BESSs, the previous power allocation schemes are achieved based on proportional power/marginal cost consensus, which, essentially, either lead to circulation and changes in operating mode of certain batteries, which further cause capacity degradation according to the above subsection, or SoC balancing cannot be achieved. That is}, whether the battery has \textcolor{blue}{a} control input or is regulated through droop control, existing solutions cannot avoid the cycle of charging and discharging of certain batteries during the process of achieving SoC balance. This is actually not conducive to the protection of battery life. {Recently}, a SoC balance scheme is developed to ensure that all batteries in a BESS network are in the same mode \cite{Xing2019,Meng2021,Meng2022,Qian2023,Wu2022}, i.e.,
	\begin{equation}
		\label{2}
		\frac{P_i}{E_i}=\frac{P_j}{E_j},\forall i,j\in\{1,2,\cdots,\mathrm{n}\}.
	\end{equation}
	After a simple calculation {according to Equal Ratios Theorem}, we arrive at {$\frac{P_i}{E_i}=\frac{\sum_{j=1}^{n}P_j}{\sum_{j=1}^{n}E_j}$, thereby}
	\begin{equation}
		\label{3}
		{P_i=\frac{P_\mathrm{a}}{E_\mathrm{a}}E_i,}
	\end{equation}
	{where $P_\mathrm{a}=\frac{1}{n}\sum_{j=1}^{n}\textcolor{blue}{P}_j$, $E_\mathrm{a}=\frac{1}{n}\sum_{j=1}^{n}E_j$.} {Here, we cite a conclusion from previous literature to construct a theorem to illustrate the effect of this mechanism on BESSs.}
	\begin{theorem}
		\label{Th1}
		\cite{Xing2019,Meng2021,Meng2022,Qian2023,Wu2022} For a resistive network with $\mathrm{n}$ BESSs, they have the same SoC relative rate of change as shown in {\eqref{2}}. In this way, SoC balance can be guaranteed among all batteries, while the supply-demand balance can be maintained.
	\end{theorem}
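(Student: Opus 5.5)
Under the allocation rule \eqref{2}, equivalently \eqref{3}, the plan is to establish two facts: every BESS's SoC follows a common exponential-type profile (SoC balance in the relative, time-oriented sense), and the total output equals the demand at every instant (supply-demand balance).

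\textbf{Supply-demand balance.} I would first define the common ratio $\gamma(t)=P_\mathrm{a}/E_\mathrm{a}$. Summing \eqref{3} over $i$ gives
\begin{equation*}
\sum_{i=1}^{\mathrm{n}}P_i=\gamma\sum_{i=1}^{\mathrm{n}}E_i=\mathrm{n}P_\mathrm{a}.
\end{equation*}
Hence the aggregate proportional output is exactly the aggregate the network must supply. In a resistive network with lossless power balance, $\mathrm{n}P_\mathrm{a}$ is by definition set equal to the total proportional load $\sum_i D_i$. So supply-demand balance holds at every $t$, provided $E_\mathrm{a}>0$. Assumption 1 ($E_i\ge \mathrm{a}_1>0$) guarantees this, so $\gamma$ is well defined.

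\textbf{SoC balance.} Substituting \eqref{3} into \eqref{1} gives the linear time-varying scalar equations
\begin{equation*}
\dot E_i=-\frac{\gamma(t)}{\mathrm{T}}E_i,
\end{equation*}
with the same coefficient for all $i$. Integrating yields $E_i(t)=E_i(0)\exp\bigl(-\frac{1}{\mathrm{T}}\int_0^t\gamma(s)ds\bigr)$. Two consequences follow:
\begin{itemize}
\item[(i)] the ratios $E_i(t)/E_j(t)=E_i(0)/E_j(0)$ are invariant, so the relative SoC rates $\dot E_i/E_i$ agree for all $i$;
\item[(ii)] all SoCs are driven by the same monotone factor, so they reach any common relative depletion level simultaneously.
\end{itemize}
From (ii), no battery hits the lower bound $\mathrm{a}_1$ (after rescaling by initial SoC) earlier than the others. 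This is the sense of SoC balance used in \cite{Xing2019}: all batteries are exhausted at the same time, and none exits early.

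\textbf{Same operating mode.} I would also note that $\mathrm{sign}(P_i)=\mathrm{sign}(\gamma)$ for all $i$. Every battery therefore charges or discharges together, so no circulating currents arise.

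\textbf{Main obstacle.} The hard step is making precise what ``SoC balance'' means. If it means $E_i-E_j\to0$, that does not follow, since the ratios are preserved. I would therefore state the theorem's balance notion explicitly as equal relative rates together with simultaneous depletion, following the cited works, rather than asymptotic equality of SoCs.
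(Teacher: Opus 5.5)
Your argument is correct in substance, and it goes further than the paper, which gives no proof of Theorem~\ref{Th1}. The paper only uses the equal-ratios step $\frac{P_i}{E_i}=\frac{\sum_j P_j}{\sum_j E_j}$ to pass from \eqref{2} to \eqref{3}. It then attributes the conclusions (supply-demand balance, SoC balance, simultaneous depletion) to the cited works. You actually derive them:
\begin{itemize}
\item summing \eqref{3} for the power balance;
\item substituting \eqref{3} into \eqref{1} to obtain one common scalar factor for all SoCs;
\item a sign argument showing that all units share one operating mode, which is exactly the motivation the paper gives for \eqref{2}.
\end{itemize}
The paper's route is shorter but contributes no content of its own. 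Yours is self-contained and makes explicit what ``balance'' can and cannot mean.

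Three points to tighten.

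\textbf{Supply-demand balance.} With the paper's literal definition $P_\mathrm{a}=\frac{1}{\mathrm{n}}\sum_j P_j$, the identity $\sum_i P_i=\mathrm{n}P_\mathrm{a}$ is a tautology. The real content is that the quantity multiplying $E_i/E_\mathrm{a}$ is the average load $D_\mathrm{a}$. That is how the paper uses it: $\hat P_{\mathrm{a},i}$ in \eqref{Pcal} is designed to track $D_\mathrm{a}$. State this identification directly rather than saying ``by definition.''

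\textbf{The remark about $\mathrm{a}_1$ is wrong as written.} Under the rule, $E_i(t)=\frac{E_i(0)}{E_\mathrm{a}(0)}E_\mathrm{a}(t)$. Hence the unit with the smallest initial SoC reaches a positive floor $\mathrm{a}_1$ first. What is simultaneous is reaching zero, at the instant $E_\mathrm{a}$ vanishes. That instant is finite if $D_\mathrm{a}$ is bounded below by a positive constant, since averaging \eqref{1} gives $\dot E_\mathrm{a}=-D_\mathrm{a}/\mathrm{T}$. It also follows that invoking Assumption 1 to keep $E_\mathrm{a}>0$ sits uneasily with depletion. Simultaneous arrival at $\mathrm{a}_1$ would require applying the rule to $E_i-\mathrm{a}_1$.

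\textbf{Your ``main obstacle.''} The same identity handles it more cleanly than the exponential formula, which hides that depletion occurs in finite time. We have $E_i-E_j=\frac{E_i(0)-E_j(0)}{E_\mathrm{a}(0)}E_\mathrm{a}(t)$. This difference shrinks in proportion to $E_\mathrm{a}$ and vanishes at the common depletion instant. That is precisely the time-oriented SoC balance referred to in Proposition~\ref{pro1}.
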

	\par According to Theorem \ref{Th1}, as long as the average power and SoC can be obtained in real-time, then all battery energy will be depleted simultaneously {under the power allocation scheme \eqref{3}}, which clearly is a centralized approach.
	\par With the help of distributed {technology}, some distributed {DAT} algorithms to estimate $P_a$ and $E_a$ are designed by authors in \cite{Xing2019,Meng2021,Meng2022,Qian2023,Wu2022}. So, a power allocation algorithm that facilitates the design of distributed solutions is designed as follows,
	\begin{equation}
		\label{Pcal}
		P_i=\frac{E_i}{\max\{\mathrm{a}_1,\hat{E}_{\mathrm{a},i}\}}\hat{P}_{\mathrm{a},i},
	\end{equation}
	where ${\hat E}_{a,i}$ and ${\hat P}_{a,i}$ are the real-time estimated values of the average SoC and the average output power. So, Theorem \ref{Th1} can be extended to the following proposition.
	\begin{proposition}
		\label{pro1}
		For a resistive network with $\mathrm{n}$ BESSs, {the output power of} each one follows \eqref{Pcal}. Provided that the average values of load and SoCs can be well estimated, time-oriented SoC balance can be guaranteed among all \textcolor{blue}{BESSs}, while the supply-demand balance can be maintained.
	\end{proposition}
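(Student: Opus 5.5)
The plan is to reduce Proposition \ref{pro1} to Theorem \ref{Th1} by showing that, once the estimates are accurate, the rule \eqref{Pcal} coincides with the centralized allocation \eqref{3}. Concretely, "well estimated" will be read as $\hat P_{\mathrm{a},i}(t)\to P_\mathrm{a}(t)$ and $\hat E_{\mathrm{a},i}(t)\to E_\mathrm{a}(t)$ for every $i$, with errors bounded by small prescribed constants $\epsilon_P,\epsilon_E$ after a transient. Here $P_\mathrm{a}$ is identified with the average proportional load $\frac1{\mathrm n}\sum_i D_i$, which the estimators will track.

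\textbf{Step 1 (saturation inactive).} By Assumption 1, $E_i\ge \mathrm a_1$ for all $i$, so $E_\mathrm{a}\ge\mathrm a_1$. If $\hat E_{\mathrm a,i}$ is within $\epsilon_E$ of $E_\mathrm a$, then
$$\max\{\mathrm a_1,\hat E_{\mathrm a,i}\}=E_\mathrm a+O(\epsilon_E).$$
This is exact in the limit, and the $\max$ only guards against division by small values during the transient. Substituting into \eqref{Pcal} gives
$$P_i=\frac{P_\mathrm a}{E_\mathrm a}E_i+r_i(t),\qquad |r_i|\le c(\epsilon_P+\epsilon_E),$$
where $c$ depends on $\mathrm a_1,\mathrm a_2$ and a bound on the load.

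\textbf{Step 2 (supply--demand balance).} Summing over $i$ gives
$$\sum_i P_i=\frac{P_\mathrm a}{E_\mathrm a}\sum_iE_i+\sum_ir_i=\mathrm nP_\mathrm a+\sum_i r_i=\sum_iD_i+O(\mathrm n(\epsilon_P+\epsilon_E)).$$
So supply matches demand up to an error that vanishes with the estimation error. This is the same identity as in Theorem \ref{Th1}, obtained through the Equal Ratios computation preceding \eqref{3}.

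\textbf{Step 3 (time-oriented SoC balance).} From \eqref{1}, $\dot E_i=-\frac1{\mathrm T}\big(\frac{P_\mathrm a}{E_\mathrm a}E_i+r_i\big)$. In the ideal case $r_i\equiv0$, all SoCs obey the same scalar linear equation $\dot E_i=-\gamma(t)E_i$ with $\gamma=P_\mathrm a/(\mathrm TE_\mathrm a)$. Hence
$$\frac{E_i(t)}{E_j(t)}=\frac{E_i(0)}{E_j(0)}\quad\text{and}\quad \frac{d}{dt}\ln\frac{E_i}{E_j}=0.$$
Every BESS therefore has the same relative SoC rate \eqref{2}, and all SoCs reach the lower bound $\mathrm a_1$ at essentially the same instant. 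In this sense no BESS exits early, which is the time-oriented balance asserted by Theorem \ref{Th1}. With $r_i\neq0$, I would write
$$\frac{d}{dt}\ln\frac{E_i}{E_j}=-\frac1{\mathrm T}\Big(\frac{r_i}{E_i}-\frac{r_j}{E_j}\Big),$$
which is bounded by $2c(\epsilon_P+\epsilon_E)/(\mathrm T\mathrm a_1)$. Integrating over the finite discharge horizon then shows that the depletion times differ by an amount that vanishes as the estimation error does.

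\textbf{Main obstacle.} The hardest step is making "well estimated" quantitative so that the $\max\{\cdot\}$ operation and the division in \eqref{Pcal} behave uniformly. This requires a uniform lower bound on $\max\{\mathrm a_1,\hat E_{\mathrm a,i}\}$ and boundedness of $\hat P_{\mathrm a,i}$ over the whole horizon, including the transient before the estimators converge. I would handle it by taking the prescribed performance bounds that the later estimators guarantee, namely errors confined inside decaying performance functions $\rho(t)\to\rho_\infty$. These give uniform bounds on $r_i$ at all times, and Step 3 then goes through.
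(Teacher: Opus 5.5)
Your reduction is the one the paper relies on. It gives no separate proof; it only observes that with accurate estimates, and with $E_\mathrm a\ge\mathrm a_1$ so the $\max$ is inactive, \eqref{Pcal} collapses to \eqref{3} and Theorem \ref{Th1} applies. Your Steps 1 and 2, the bound on $r_i$, and the ideal-case identity $E_i/E_j=\mathrm{const}$ are all correct.

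What fails is the conclusion you draw from that identity in Step 3. Since $\dot E_\mathrm a=-D_\mathrm a/\mathrm T$ and $\frac{d}{dt}(E_i/E_\mathrm a)=0$, one has $E_i(t)=\frac{E_i(0)}{E_\mathrm a(0)}E_\mathrm a(t)$. So all SoCs reach \emph{zero} together, at the moment $E_\mathrm a$ does. But whenever the initial SoCs differ, they cross any fixed positive level such as $\mathrm a_1$ at different times, the BESS with the smallest $E_i(0)$ first. For instance, with $E_1(0)=0.9$, $E_2(0)=0.5$, $\mathrm a_1=0.1$, BESS 2 hits $\mathrm a_1$ when $E_\mathrm a=0.14$, while BESS 1 does so only when $E_\mathrm a\approx0.078$.

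Hence ``all SoCs reach the lower bound $\mathrm a_1$ at essentially the same instant'' contradicts the ratio invariance you just derived. Simultaneous arrival at $\mathrm a_1$ would require allocating in proportion to $E_i-\mathrm a_1$, not $E_i$. It also means Assumption 1 can hold only up to the first crossing.

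The same point undermines your closing claim about depletion times. At level $\mathrm a_1$ there is no simultaneous baseline to perturb. At level $0$ your estimate $\bigl|\frac{d}{dt}\ln(E_i/E_j)\bigr|\le 2c(\epsilon_P+\epsilon_E)/(\mathrm T\mathrm a_1)$ is unavailable, because it uses $E_i\ge\mathrm a_1$.

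Moreover, once $E_\mathrm a<\mathrm a_1$ the $\max$ is active and \eqref{Pcal} becomes $P_i\approx E_iD_\mathrm a/\mathrm a_1$, so Step 1 no longer describes a small perturbation of \eqref{3}. The relative rates stay approximately equal, but $\sum_iP_i\approx\mathrm nE_\mathrm aD_\mathrm a/\mathrm a_1<\sum_iD_i$. The SoCs then decay exponentially rather than reaching zero in finite time. This is consistent with the supply--demand imbalance the paper reports near $t\approx6.5\,\mathrm h$ in Case 1.

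What your computation does establish, and what ``time-oriented SoC balance'' should mean here, is equality of the relative rates \eqref{2}. Equivalently, $E_i/E_\mathrm a$ is invariant, i.e.\ every BESS has the common time-to-empty $E_i/|\dot E_i|=\mathrm TE_\mathrm a/D_\mathrm a$. This holds exactly for exact estimates and up to $O(\epsilon_P+\epsilon_E)$ otherwise, together with supply--demand balance. It is valid on the interval where $E_\mathrm a\ge\mathrm a_1$, and additionally $E_i\ge\mathrm a_1$ for your logarithmic bound. Replace the $\mathrm a_1$-crossing and depletion-time claims with that statement.
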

	\subsection{Control Objectives}
	\par Based on Proposition \ref{pro1}, many researchers are committed to developing leader-follower-based \cite{Xing2019,Meng2021,Meng2022}, PI-based \cite{Xing2019,Meng2021,Meng2022}, finite-time \cite{Meng2021}, and event-triggered-based \cite{Qian2023} distributed solutions. \textcolor{blue}{It is noted} that existing solutions require prior knowledge of dynamic loads, and consensus performance and tracking performance are coupled, which makes it difficult to choose gains. To alleviate this dilemma, the PPC method is introduced to constrain consensus performance and design a mechanism with intermittent execution/communication. Provide a detailed description of the objectives of this article in Problem 1.
	\begin{figure}
		\centering
		\includegraphics[width=8cm]{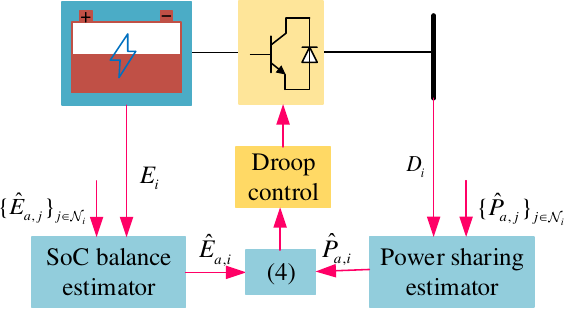} \caption{{A power distribution scheme for droop-controlled BESSs.}\label{figo}}
	\end{figure}
	\begin{problem} \label{problem}
		Consider a resistive network with $\mathrm{n}$ {droop-controlled} BESSs, each of which is equipped with \eqref{Pcal}. Two distributed estimators need to be designed to address the following issues.	
		\\(1) The {power} sharing state is well estimated so that the convergence rate is not less than the predetermined value, and the error does not exceed the pre-specified value in steady-state, i.e.,
		$$\lim\limits_{t\to+\infty}\left| {{\hat P_{a,i} } - {D_a}} \right|\le \epsilon_P,$$
		where $i \in \left\{ {1,2, \cdots ,N} \right\}$, ${D_a}=\frac{1}{n}\sum\limits_{i=1}^{n}D_i$, $D_i$ is the load \textcolor{blue}{power} that BESS $i$ needs to share, $\epsilon_E$ is a pre-specified value.\\
		(2) The SoC balance state is well estimated so that the convergence rate is not less than the predetermined value, and the error does not exceed the pre-specified value in steady-state, i.e., 
		$$\lim\limits_{t\to+\infty}\left| {{\hat E_{a,i}} - {E_{a}}} \right| \le \epsilon_E,$$
		where $i \in \left\{ {1,2, \cdots ,N} \right\}$, $\epsilon_E$ is a pre-specified value.
	\end{problem}
	\section{Preliminaries}
	{Some useful knowledge is introduced in this section.} Algebraic graph theory is introduced here as a useful tool to \textcolor{blue}{describe} communication technologies between agents. Some relevant knowledge about PPC and dynamical systems is also introduced in this section for estimators design and stability proof.
	\subsection{Graph Theory}
	\par The method proposed in this paper is based on MASs to design distributed estimators. Communication is required among agents managing BESSs to achieve control objectives. For a communication topology ${\cal G} = ({\cal V},{\cal E})$ for agents managing BESSs, it often includes some vertices, i.e., $\textcolor{blue}{\mathrm{v}}_i$ for $i\in\{1,2,\cdots,\textcolor{blue}{\mathrm{n}}\}$ and ${\cal V}=\{\textcolor{blue}{\mathrm{v}}_i|i\in \{1,2,\cdots,\textcolor{blue}{\mathrm{n}}\}\}$, and edges with these vertices as endpoints, i.e., $(\textcolor{blue}{\mathrm{v}}_i,\textcolor{blue}{\mathrm{v}}_j)$ and $\mathcal{E}=\{(\textcolor{blue}{\mathrm{v}}_i,\textcolor{blue}{\mathrm{v}}_j)|\textcolor{blue}{\mathrm{v}}_i,\textcolor{blue}{\mathrm{v}}_j\in\mathcal{V}\}$. As a result, vertexes correspond to each agent one by one. If these edges are all directed, the graph is called {a directed one}; Otherwise, it is called {a undirected one}. In {a} directed graph version, the starting node $\textcolor{blue}{\mathrm{v}}_i$ of a directed edge, such as $(\textcolor{blue}{\mathrm{v}}_i,\textcolor{blue}{\mathrm{v}}_j)$, is considered as the neighbor of the end node $\textcolor{blue}{\mathrm{v}}_j$. In other words, the information of the agent $i$ is considered to flow to and usable by \textcolor{blue}{agent $j$}. In \textcolor{blue}{an} undirected graph version, $(\textcolor{blue}{\mathrm{v}}_i,\textcolor{blue}{\mathrm{v}}_j)$ implies that these two vertices are neighbors to each other, and
	both agents, correspondingly, are able to utilize each other’s information. Denote ${\cal N}_i$ as the neighbor set of agent $i$, which is considered not to belong to this set.
	\subsection{PPC Method}
	\par {Applying PPC method} can ensure that the system exhibits prescribed dynamic and steady-state performance indicators, which are encapsulated in a performance function $\rho(t)$. That is, the scalar error $e(t)$ of the system will converge to any given residual set $(-\rho_\infty,\rho_\infty)$ at a rate not less than a predetermined rate $\lambda$. Next, \textcolor{blue}{it is necessary to} review this idea, which mainly draws on the discussions of relevant authors in \cite{Bechlioulis2008, Bechlioulis2009}.
	\par The design of a prescribed performance controller depends on the selection of smooth and bounded performance functions, {such as,}
	\begin{equation}
		\label{5}
		\rho(t)=(\rho_0-\rho_\infty)e^{-\lambda t}+\rho_\infty,
	\end{equation}
	where $\rho_0$, $\rho_\infty$ and $\lambda$ are positive constants, $\rho_0$, furthermore, is
	strictly selected so that $\rho_0> |e(0)|$. In this way, the scalar error $e(t)$ is strictly limited to evolving within the predetermined region, i.e,
	$$-\rho(t)<e(t)<\rho(t),\forall t\ge 0.$$
	And this also indicates that $e(t)$ will not decay at a rate lower than $\lambda$ and ultimately ensure steady-state performance.
	\par Next, let’s review the error transformation technology for future use. Define a modulated error $\xi(t)=\frac{e(t)}{\rho(t)}$. Applying a strictly increasing, odd and bijective mapping $T$:$(-1,1)\to(-\infty,+\infty)$ with $T(0)=0$, taking the following as an example, \textcolor{blue}{one can obtain that}
	\begin{equation}
		\label{6}
		T(\xi)=\frac{1}{2}\mathrm{ln}(\frac{1+\xi}{1-\xi}),
	\end{equation}
	whose Jacobian derivative is found as follow,
	\begin{equation}
		\label{7}
		J_T(\xi)=\frac{\mathrm{d}T(\xi)}{\mathrm{d}\xi}=\frac{1}{1-{\xi}^2},
	\end{equation}
	such that  it always behaves positively. In addition, (\ref{6}) and (\ref{7}) have the following properties.
	\begin{lemma}
		\label{lepre}
		\cite{Bechlioulis2010} For any $\mathrm{K}$, $\mathrm{F}>0$, there must be
		$$-\mathrm{K}T(\xi)J_T(\xi)\xi+\mathrm{F}<0$$
		for all $\vert T(\xi)\vert>\bar\epsilon$ with a certain positive constant $\bar\epsilon$.
	\end{lemma}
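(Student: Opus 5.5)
The plan is to work directly with the explicit mapping \eqref{6} and its Jacobian \eqref{7}. The key observation is that the product $T(\xi)J_T(\xi)\xi$ is nonnegative and grows without bound as $|\xi|\to 1$, that is, as $|T(\xi)|\to\infty$. The proof then amounts to finding a threshold beyond which this product exceeds $\mathrm{F}/\mathrm{K}$.

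\textbf{Sign and symmetry.} Since $T$ is odd and strictly increasing with $T(0)=0$, $T(\xi)$ and $\xi$ have the same sign for every $\xi\in(-1,1)$, so $T(\xi)\xi=|T(\xi)||\xi|\ge 0$. Moreover $J_T(\xi)=\frac{1}{1-\xi^2}\ge 1>0$. Hence
$$g(\xi):=T(\xi)J_T(\xi)\xi=\frac{|T(\xi)|\,|\xi|}{1-\xi^2}\ge |T(\xi)|\,|\xi|.$$
The function $g$ is even, so it suffices to treat $\xi\in[0,1)$.

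\textbf{Monotonicity.} On $[0,1)$, each of the three factors $T(\xi)$, $\xi$ and $\frac{1}{1-\xi^2}$ is nonnegative and nondecreasing. Therefore $g$ is nondecreasing, and $g(\xi)\to+\infty$ as $\xi\to 1^-$. Because $T$ is a bijection onto $\mathbb{R}$, the condition $|T(\xi)|>\bar\epsilon$ is equivalent to $|\xi|>\xi^*:=T^{-1}(\bar\epsilon)=\tanh(\bar\epsilon)\in(0,1)$.

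\textbf{Choice of $\bar\epsilon$.} Given $\mathrm{K},\mathrm{F}>0$, we need $g(\xi)>\mathrm{F}/\mathrm{K}$ whenever $|\xi|>\xi^*$. By monotonicity it is enough to have $g(\xi^*)\ge \mathrm{F}/\mathrm{K}$. The lower bound $g(\xi)\ge |T(\xi)||\xi|$ makes this explicit: at the threshold, $g(\xi^*)\ge \bar\epsilon\tanh(\bar\epsilon)$, and $\bar\epsilon\tanh\bar\epsilon$ is continuous, strictly increasing and unbounded in $\bar\epsilon$. We therefore choose $\bar\epsilon$ large enough that $\bar\epsilon\tanh(\bar\epsilon)>\mathrm{F}/\mathrm{K}$; for example, any $\bar\epsilon\ge\max\{1,\,2\mathrm{F}/\mathrm{K}\}$ works, since $\tanh 1>1/2$.

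\textbf{Conclusion.} For $|T(\xi)|>\bar\epsilon$ we then have $\mathrm{K}g(\xi)\ge \mathrm{K}\bar\epsilon\tanh\bar\epsilon>\mathrm{F}$. This gives $-\mathrm{K}T(\xi)J_T(\xi)\xi+\mathrm{F}<0$, as claimed.

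There is no real obstacle in this argument. The only point that needs care is the sign step, which requires $T$ to be odd and increasing so that $T(\xi)\xi\ge0$. If a general odd, strictly increasing bijection $T$ were used instead of \eqref{6}, the same argument would go through provided $J_T$ is bounded below by a positive constant; for the specific mapping \eqref{6}, $J_T\ge 1$ holds trivially.
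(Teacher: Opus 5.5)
Your proof is correct. The paper itself gives no argument for this lemma; it simply imports it from Bechlioulis and Rovithakis, where it is stated as a property of the general class of error transformations used in prescribed-performance control.

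Your route is a self-contained check for the specific mapping $T(\xi)=\frac12\ln\frac{1+\xi}{1-\xi}$. It rests on only two facts: $T(\xi)\xi=|T(\xi)||\xi|\ge 0$ and $J_T(\xi)\ge 1$. It also produces an explicit threshold, $\bar\epsilon\ge\max\{1,2\mathrm{F}/\mathrm{K}\}$, which the citation does not supply. The citation, in turn, covers other admissible transformations without recomputation. Your closing remark correctly identifies the extra lower bound on $J_T$ that such a generalisation would need.

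The monotonicity step is not actually needed. The condition $|T(\xi)|>\bar\epsilon$ already gives $|\xi|>\tanh\bar\epsilon$, so $T(\xi)J_T(\xi)\xi\ge|T(\xi)||\xi|>\bar\epsilon\tanh\bar\epsilon$ follows directly.

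The nonnegativity you establish is exactly what justifies the vector form of the lemma invoked in Phase B of the proof of the main theorem. There, $(T^P)^{\mathrm T}J_T^P\xi^P=\sum_l T(\xi_l^P)J_T(\xi_l^P)\xi_l^P$ is a sum of nonnegative terms. Hence $\|T^P\|>\sqrt{\mathrm m}\,\bar\epsilon$ forces some $|T(\xi_l^P)|>\bar\epsilon$, and the whole sum then exceeds $\mathrm{F}/\mathrm{K}$.
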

	\subsection{Dynamical Systems}
	Consider the following initial value problem for a dynamical system,
	\begin{equation}
		\label{8}
		\dot x=f(t,x),\;x(0)=x^0\in\Omega_x,
	\end{equation}
	where $f:\mathbb{R}_{+} \times \Omega_x \rightarrow \mathbb{R}^n$ and $\Omega_x\subset \mathbb{R}^n$ is an open and nonempty set.
	\begin{definition}
		\cite{Sontag1998} For the initial-value problem (\ref{8}), there is a maximal solution $x(t)$ if it cannot be extended to the right.
	\end{definition}
	\begin{lemma}
		\label{lem1}
		\cite{Sontag1998} Consider the initial-value problem (\ref{8}) for $x\in \Omega_x$. If $f(t,x)$ exhibits local Lipschitz on $x$ for $t\le 0$, piecewise {continuous} and locally integrable on $t$ at each $x\in \Omega_x$, the dynamical system \eqref{8} must have a maximal solution $x(t)\in\Omega_x$ on a time interval $[0,t_\mathrm{max})$ with $t_\mathrm{max}>0$.
	\end{lemma}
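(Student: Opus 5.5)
The argument is the standard local existence and continuation one for Carathéodory-type right-hand sides. I read the hypothesis as local Lipschitz continuity in $x$ for $t\ge 0$, since the problem is posed on $\mathbb{R}_+$ and "$t\le 0$" appears to be a typo. The plan is to build a local solution by Picard iteration, show that solutions agree on common intervals, and glue all of them into one solution that cannot be extended.

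\emph{Step 1: local existence.} Since $\Omega_x$ is open and $x^0\in\Omega_x$, fix $r>0$ with the closed ball $\bar B=\{x:\|x-x^0\|\le r\}\subset\Omega_x$.
\begin{itemize}
\item On $[0,1]\times\bar B$, piecewise continuity in $t$ together with the local Lipschitz property in $x$ give a Lipschitz constant $L$ on $\bar B$ for all $t\in[0,1]$.
\item They also give an integrable bound $m(t)\ge\|f(t,x)\|$ there: $\|f(t,x)\|\le\|f(t,x^0)\|+Lr$, and $t\mapsto\|f(t,x^0)\|$ is locally integrable.
\item Choose $\tau\in(0,1]$ with $\int_0^\tau m(s)\,\mathrm{d}s\le r$ and $L\tau<1$.
\end{itemize}
Consider the operator $(\Phi x)(t)=x^0+\int_0^t f(s,x(s))\,\mathrm{d}s$ on $C([0,\tau];\bar B)$ with the sup norm. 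It maps this complete space into itself, and it is a contraction with factor $L\tau$. Banach's fixed point theorem then yields an absolutely continuous solution on $[0,\tau]$ that satisfies \eqref{8} for almost every $t$, which is the appropriate solution concept when $f$ is only piecewise continuous in $t$.

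\emph{Step 2: uniqueness.} Let $x_1,x_2$ be two solutions on a common interval $[0,s)$ starting from the same point. On any compact subinterval both stay in a compact subset of $\Omega_x$, where $f$ is Lipschitz in $x$. Gronwall's inequality applied to $\|x_1-x_2\|$ then gives $x_1\equiv x_2$. The same argument works when restarting from any intermediate time $t_1$ with state in $\Omega_x$, because Step 1 applies verbatim to the shifted problem.

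\emph{Step 3: maximality.} Let $t_{\max}$ be the supremum of all $s>0$ such that a solution exists on $[0,s)$; by Step 1, $t_{\max}\ge\tau>0$. By Step 2 all such solutions coincide on their common domains, so they glue into a single solution $x(t)\in\Omega_x$ on $[0,t_{\max})$.

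This solution cannot be extended to the right. If $t_{\max}<\infty$ and some solution $y$ lived on a strictly larger interval $[0,s')$, then by uniqueness $y$ agrees with $x$ on $[0,t_{\max})$, and $y$ itself exists on $[0,s')$ with $s'>t_{\max}$. This contradicts the definition of $t_{\max}$ as a supremum. Hence $x$ is a maximal solution in the sense of the preceding definition.

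\emph{Main obstacle.} The delicate point is Step 1, because $f$ is only piecewise continuous in $t$ rather than continuous. One must work with the integral equation and absolutely continuous solutions instead of classical $C^1$ solutions, and obtain the integrable majorant $m(t)$ on the ball. I would handle this exactly as above, by bounding $\|f(t,x)\|$ through its value at $x^0$ plus the Lipschitz term, and then choosing $\tau$ small enough for both the invariance condition and the contraction condition.
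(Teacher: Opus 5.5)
Your proof is correct and follows the standard argument behind the cited result; the paper gives no proof of its own and only cites Sontag, whose proof likewise uses contraction-mapping local existence, Gronwall uniqueness, and gluing all solutions up to the supremal existence time. One step needs a small repair. Piecewise continuity in $t$ does not by itself give a Lipschitz constant uniform in $t\in[0,1]$: for example, $f(t,x)=\sqrt{t}\sin(x/t)$ with $f(0,x)=0$ is continuous in $t$ but has Lipschitz constant $t^{-1/2}$ in $x$. So either read the hypothesis as uniform local Lipschitz on compact time intervals, or, as in Sontag, use a locally integrable Lipschitz function $\alpha(t)$ on a ball $\bar B$ small enough to lie in the Lipschitz neighbourhood of $x^0$. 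In the latter case take the majorant $m(t)=\|f(t,x^0)\|+r\alpha(t)$ and choose $\tau$ with $\int_0^\tau \alpha(s)\,\mathrm{d}s<1$ in place of $L\tau<1$.
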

	\begin{proposition}
		\label{pro2}
		\cite{Sontag1998} Based on Lemma \ref{lem1}, for a finite time interval $[0,t_\mathrm{max})$ with $t_\mathrm{max}<\infty$ {and any compact subset $\Omega^{'}_x\in \Omega_x$}, there must exist a time instant $t'\in [0,t_\mathrm{max})$ such that $x(t')\in\Omega_x^{'}$.
	\end{proposition}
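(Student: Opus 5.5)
The plan is to take the statement exactly as worded and produce a time $t'\in[0,t_\mathrm{max})$ with $x(t')\in\Omega_x'$. The ingredient I will use from Lemma \ref{lem1} is that the maximal solution is an absolutely continuous curve $x:[0,t_\mathrm{max})\to\Omega_x$ with $x(0)=x^0$. The argument is then about where this curve sits relative to the compact set $\Omega_x'$, and it does not need the dynamics \eqref{8} beyond this.

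\textbf{Step 1: the case $x^0\in\Omega_x'$.} Here I take $t'=0$. This lies in $[0,t_\mathrm{max})$ because $t_\mathrm{max}>0$ by Lemma \ref{lem1}, and $x(0)=x^0\in\Omega_x'$. This is the situation in which the proposition is used later in the paper. There the compact subset (e.g.\ $\Omega_\xi'\subset\Omega_\xi$) is built around the initial modulated errors. These satisfy $|\xi(0)|<1$ because $\rho_0>|e(0)|$, so the initial condition belongs to the chosen compact set.

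\textbf{Step 2: the general case.} The set $S=\{t\in[0,t_\mathrm{max}) : x(t)\in\Omega_x'\}$ is the preimage of a closed set under the continuous map $t\mapsto x(t)$, so $S$ is closed in $[0,t_\mathrm{max})$. The claim is exactly that $S\neq\emptyset$. This reduces the proof to checking that the trajectory meets $\Omega_x'$ at least once. When $\Omega_x'$ is chosen to contain some known point of the trajectory, any such point gives the required $t'$; the initial point of Step 1 is the canonical choice.

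\textbf{Main obstacle.} As literally worded, with ``any compact subset'', the claim cannot hold without the implicit requirement that $\Omega_x'$ actually intersects the trajectory. A compact set disjoint from the curve $x([0,t_\mathrm{max}))$ would give a counterexample, and so would $\Omega_x'=\emptyset$. So the key point of the write-up is to make this requirement explicit: $\Omega_x'$ contains $x^0$, or more generally meets the solution curve. Under that condition, Steps 1--2 give $t'$ directly.
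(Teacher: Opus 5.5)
\textbf{The gap: you prove a trivial statement, not the one the paper relies on.} You are right that the statement, read literally with $x(t')\in\Omega_x'$ for \emph{any} compact $\Omega_x'$, is false. Your repair, however, goes in the wrong direction and leaves you proving a triviality. The membership sign is a typo for $\notin$. The cited result is the standard extension theorem for maximal solutions, also used in this form by Bechlioulis and Rovithakis: if $t_\mathrm{max}<\infty$, then for every compact $\Omega_x'\subset\Omega_x$ there is $t'\in[0,t_\mathrm{max})$ with $x(t')\notin\Omega_x'$. In words, a solution with finite maximal time must eventually leave every compact subset of the open domain.

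This is exactly how the paper uses it in Phase B of the proof of Theorem~\ref{TH1}. There it shows that $\xi^P(t)$ stays in a compact set $\Omega_\xi'$ on all of $[0,t_\mathrm{max})$. It then invokes the proposition to get an instant with $\xi^P(t')\notin\Omega_\xi'$ if $t_\mathrm{max}<+\infty$, which is a contradiction, so $t_\mathrm{max}=+\infty$. Your version only says that the trajectory visits $\Omega_x'$ at least once. That is fully compatible with $\xi^P$ staying in $\Omega_\xi'$ forever, so it produces no contradiction and cannot support that step. Your remark that the paper applies the proposition with $\Omega_\xi'$ built around $\xi^P(0)$ therefore misreads the usage. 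A warning sign is that your argument uses neither $t_\mathrm{max}<\infty$ nor any property of $f$.

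\textbf{The missing idea is the extension argument.} Suppose, for contradiction, that $t_\mathrm{max}<\infty$ and $x(t)\in\Omega_x'$ for all $t\in[0,t_\mathrm{max})$.
\begin{enumerate}
\item The hypotheses behind Lemma~\ref{lem1}, in Sontag's formulation, give an integrable $m$ on $[0,t_\mathrm{max}]$ with $\|f(t,x)\|\le m(t)$ for all $x\in\Omega_x'$.
\item Hence $\|x(t)-x(s)\|\le\int_s^t m(\tau)\,\mathrm{d}\tau$, so $x(t)$ is Cauchy as $t\to t_\mathrm{max}^-$ and has a limit $\bar x$.
\item Since $\Omega_x'$ is compact, hence closed, $\bar x\in\Omega_x'\subset\Omega_x$.
\item Because $\Omega_x$ is open, the local existence part of Lemma~\ref{lem1} applies to the initial value problem started at time $t_\mathrm{max}$ from $\bar x$.
\item Concatenating this local solution with $x$ gives a solution on $[0,t_\mathrm{max}+\varepsilon)$, which contradicts maximality.
\end{enumerate}
This is the content the paper imports from Sontag's book, and neither Step~1 nor Step~2 of your proposal addresses it.
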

	\section{Design of SoC Balancing Scheme Under Intermittent Dynamics}
	{In this section, two} estimation schemes with prescribed performance driven by intermittent dynamics {are} designed to estimate the {power} sharing state and SoC balance state. Subsequently, {these} schemes {are} further developed to achieve intermittent communication. Before this, it is necessary to explain the load to be supported.
	\begin{assumption}
		\label{assum2}
		In this paper, the total load is continuous, bounded, and has a first derivative.
	\end{assumption}
	\par Referring to Assumption \ref{assum2}, the load that each node needs to bear is continuous, bounded, and has a first derivative.
	\par Considering that continuous communication among agents cannot be achieved through the underlying network, an intermittent communication scheme with an {self} triggering mechanism is designed in this section to accurately estimate the average state of {batteries. Prior to this, some preliminary work, namely an event triggering mechanism, is designed to determine the frequency of updating consensus item}.
	\subsection{Design of {Estimators} under Intermittent {Dynamics}}
	\par The {two estimators} with event triggering mechanisms adopted are as follows 
	{\begin{subequations}
	   \begin{equation}
		   \label{9a}
		   \dot{z}_i^P=-\mathrm{k}^P_{\mathrm{r}}(\hat{P}_{\mathrm{a},i}-D_i)-\mathrm{k}^P\sum_{j\in\mathcal{N}_i}h_{ij}^P(t_{P,k}^{ij})\textcolor{blue}{,}
	\end{equation}
	\begin{equation}
		\label{9b}
		\hat{P}_{\mathrm{a},i}=z_i^P+D_i\textcolor{blue}{,}
	\end{equation}
	\begin{equation}
		\label{9c}
		\dot{z}_i^E=-\mathrm{k}^E_\mathrm{r}(\hat{E}_{\mathrm{a},i}-E_i)-\mathrm{k}^E\sum_{j\in\mathcal{N}_i}h_{ij}^E(t_{E,k}^{ij})\textcolor{blue}{,}
	\end{equation}
	\begin{equation}
		\label{9d}
		\hat{E}_{\text{a},i}=z_i^E+E_i\textcolor{blue}{,}
	\end{equation}
    \end{subequations}}
 where {$h_{ij}^P(t)=\frac{1}{\rho^P_{ij}}J_{T}(\frac{\hat{P}_{\mathrm{a},i}-\hat{P}_{\mathrm{a},j}}{\textcolor{blue}{\rho^P_{ij}}})T(\frac{\hat{P}_{\mathrm{a},i}-\hat{P}_{\mathrm{a},j}}{\textcolor{blue}{\rho^P_{ij}}}),$} {$h_{ij}^E(t)=\frac{1}{\rho^E_{ij}}J_{T}(\frac{\hat{E}_{\mathrm{a},i}-\hat{E}_{\mathrm{a},j}}{\textcolor{blue}{\rho^E_{ij}}})T(\frac{\hat{E}_{\mathrm{a},i}-\hat{E}_{\mathrm{a},j}}{\textcolor{blue}{\rho^E_{ij}}}),$ $\rho^P_{ij}$ and $\rho^E_{ij}$} are performance functions defined by \eqref{5}. {Here, event triggering mechanisms as follow are provided for the designed two estimators}
\begin{subequations}
	\label{10}
	    \begin{equation}
		\label{10a}
		{t_{P,k+1}^{ij}=\inf\limits_{t>t_{P,k}^{ij}}}\left\{t\in\mathbb{R}_+:{\tilde{h}_{P,ij}^2(t)}\geq\mathrm{c}_0+\mathrm{c}_1\mathrm{e}^{-\alpha t}\right\},
	\end{equation}
	    \begin{equation}
		\label{10b}
		{t_{E,k+1}^{ij}=\inf\limits_{t>t_{E,k}^{ij}}\left\{t\in\mathbb{R}_+:{\tilde{h}_{E,ij}^2(t)}\geq\mathrm{d}_0+\mathrm{d}_1\mathrm{e}^{-\beta t}\right\},}
	\end{equation}
\end{subequations}
    where the measurement \textcolor{blue}{errors} ${\tilde{h}_{P,ij}(t)}$ and ${\tilde{h}_{E,ij}(t)}$ are defined as
    $${\tilde{h}_{P,ij}(t)}={h_{ij}^P}(t_{P,k}^{ij})-{h_{ij}^P}(t),\forall t\in\left[t_{P,k}^{ij},t_{P,k+1}^{ij}\right),$$
    $${\tilde{h}_{E,ij}(t)={h_{ij}^E}(t_{E,k}^{ij})-{h_{ij}^E}(t),\forall t\in\left[t_{E,k}^{ij},t_{E,k+1}^{ij}\right),}$$
    and $\mathrm{c}_0$, $\mathrm{c}_1$, $\alpha$, {$\mathrm{d}_0$, $\mathrm{d}_1$, and $\beta$} are positive parameters to be designed. 
    {\begin{remark}   		
    From \eqref {10}, it can be seen that the event triggering scheme adopted is an asynchronous triggering mechanism based on a directed communication link. More precisely, whether agent $i$ sends its own state to another agent $j$ depends on the state difference between agent $i$ and $j$, which is different from the requirement in \cite{9770469} that both parties exchange information at the trigger time. In addition, for a BESS, the SoC balance state estimator and power sharing state estimator are also asynchronously triggered. That is to say, for the agent $i$ managing BESS $i$, ${\hat E}_{a,i}$ and ${\hat P}_{a,i}$ do not need to be transmitted simultaneously, which is also different from \cite{9770469}.
    \end{remark}}
    \par {Since the two estimators \textcolor{blue}{are in} the same form, the properties of the power sharing estimator \textcolor{blue}{is only analyzed as follow}, and then a theorem about the SoC balance state estimator \textcolor{blue}{is attached without any proof}. A theorem and its proof regarding the designed estimator are presented as follows.}
	\begin{theorem}
		\label{TH1}
		Consider a microgrid containing $\mathrm{n}$ BESSs. Under the control strategy {\eqref{9a} and \eqref{9b} with \eqref{10}, the average load value} can be estimated for all $i\in\{1,\cdots,\mathrm{n}\}$ with steady-state error as follow,
		\begin{equation}
			\label{11}
			\lim\limits_{t\to+\infty}\vert \hat P_{a,i}-D_a\vert\le \frac{\mathrm{Diam}({\cal G})\rho_\infty}{2},
		\end{equation}
		and Zeno behavior can be avoided, where $\mathrm{Diam}({\cal G})$ denotes the diameter of ${\cal G}$. In particular, {consensus error also has prescribed dynamic and steady performance defined in the performance function $\rho^P_{ij}(t)$}.
	\end{theorem}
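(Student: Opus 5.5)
The plan is to split the argument into three parts: an invariance (average-tracking) part, a prescribed-performance part, and a Zeno-exclusion part.

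\emph{Average tracking.} Assume the graph is undirected and connected, and that $\rho^P_{ij}=\rho^P_{ji}$. Then $h_{ij}^P=-h_{ji}^P$, because $J_T$ is even and $T$ is odd. Sum \eqref{9a} over $i$ and use \eqref{9b}. Wherever the pairwise terms cancel, this gives $\sum_i \dot z_i^P=-\mathrm{k}^P_r\sum_i z_i^P$. If the $z_i^P$ are initialized so that $\sum_i z_i^P(0)=0$ (e.g. $z_i^P(0)=0$), then $\sum_i z_i^P(t)\equiv 0$. Equivalently, $\frac{1}{\mathrm{n}}\sum_i \hat P_{a,i}=D_a$ for all $t$, and no prior knowledge of $\dot D_i$ is needed for this identity.

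With triggering, the sampled values $h^P_{ij}(t^{ij}_{P,k})$ and $h^P_{ji}(t^{ji}_{P,k})$ need not cancel. The residual is $\sum_{(i,j)}(\tilde h_{P,ij}-\tilde h_{P,ji})$, which is bounded by $2\mathrm{m}\sqrt{\mathrm{c}_0+\mathrm{c}_1}$ under \eqref{10a}. I would therefore keep it as a bounded disturbance in the average dynamics, decaying with rate $\mathrm{k}^P_r$. I would either argue that this contributes an arbitrarily small offset, or assume symmetric triggering on each link so that exact cancellation holds. I expect this point to need care.

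\emph{Prescribed performance.} Define the edge errors $\delta_l^P=\hat P_{a,i}-\hat P_{a,j}$ for link $l=(i,j)$, the modulated errors $\xi_l=\delta^P_l/\rho^P_l$, and the vector forms $\xi^P$ and $T^P$. Let $\Omega_\xi=(-1,1)^{\mathrm{m}}$, which contains $\xi^P(0)$ since $\rho_0>|\delta_l^P(0)|$. The closed-loop $\xi$-dynamics are locally Lipschitz in $\xi$ and piecewise continuous in $t$ (piecewise because of the triggering). Lemma \ref{lem1} then gives a maximal solution on $[0,t_{\max})$.

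On that interval, take $V=\frac12 (T^P)^\top T^P$. Differentiating and using $\dot{\hat P}_{a,i}=-\mathrm{k}^P_r z_i^P-\mathrm{k}^P\sum_j h_{ij}(t_k)+\dot D_i$, one gets
\[
\dot V \le -\mathrm{k}^P (T^P)^\top J_T^P\rho^{-1}\bar B^\top \bar B\rho^{-1}J_T^PT^P + \|T^P\|\,\|J_T^P\rho^{-1}\|\,F,
\]
where $\bar B$ is the incidence matrix. The constant $F$ collects the following bounded quantities:
\begin{itemize}
\item $\dot D_i$, bounded by Assumption \ref{assum2};
\item $\mathrm{k}_r^P z^P$, bounded via $\delta$ and $\rho$;
\item $\dot\rho\,\xi/\rho$;
\item the measurement errors $\tilde h_{P,ij}$, bounded by $\sqrt{\mathrm{c}_0+\mathrm{c}_1}$.
\end{itemize}

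For a tree, or when the edge Laplacian is positive definite on the relevant subspace, the quadratic term dominates the linear one, in the spirit of Lemma \ref{lepre}. This yields $\dot V<0$ whenever $\|T^P\|>\bar\epsilon$. Hence $T^P$ is bounded, so $\xi^P$ stays in a compact subset $\Omega'_\xi\subset\Omega_\xi$ with $|\xi_l|\le\xi^*<1$. If $t_{\max}<\infty$, Proposition \ref{pro2} gives a contradiction, so $t_{\max}=\infty$.

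Consequently $|\hat P_{a,i}-\hat P_{a,j}|<\rho^P_{ij}(t)$ for all $t\ge0$, which is the prescribed consensus performance. The main obstacle is the cycle-space issue: for graphs with cycles, the edge Laplacian is singular. I would first try to handle this by using a spanning-tree parametrization of the edge errors, or by appealing to the unweighted Laplacian structure.

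\emph{Steady-state bound and Zeno exclusion.} Since the average equals $D_a$, $\hat P_{a,i}-D_a=\frac1{\mathrm{n}}\sum_j(\hat P_{a,i}-\hat P_{a,j})$. Each difference is a sum of at most $\mathrm{Diam}(\mathcal G)$ edge errors along a shortest path, so it is bounded by $\mathrm{Diam}(\mathcal G)\rho_\infty$ as $t\to\infty$. Choosing $i$ relative to the spread gives half of this. More precisely, all $\hat P_{a,j}$ lie in an interval of length at most $\mathrm{Diam}(\mathcal G)\rho_\infty$ that contains the mean, which yields the factor $\frac12$ claimed in \eqref{11}, at least via a midpoint argument.

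For Zeno exclusion, $\xi^P$ lives in the compact set $\Omega'_\xi$. Hence $h^P_{ij}$, $J_T$, and their time derivatives are bounded by some $\bar h$. Then $\tilde h_{P,ij}$ grows at most linearly from $0$ after each trigger. Because the threshold satisfies $\mathrm{c}_0+\mathrm{c}_1e^{-\alpha t}\ge \mathrm{c}_0>0$, the inter-event time is at least $\sqrt{\mathrm{c}_0}/\bar h>0$, which rules out Zeno behavior.
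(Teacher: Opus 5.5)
\textbf{The main gap is in the prescribed-performance step.} It comes from how you treat $\mathrm{k}^P_{\mathrm r}z^P$. By putting it into $F$ as a bounded disturbance, you throw away the damping it contains, which does not depend on the graph. You then need the consensus term $-\mathrm{k}^P\|\mathrm{B}R^{-1}J_T^PT^P\|^2$ to dominate $\|T^P\|\,\|J_T^PR^{-1}\|F$. You justify this only for trees, and you do not carry out either remedy you mention for graphs with cycles. So the bound $|\xi^P_l|\le\xi^*<1$, and everything that depends on it, is unproven for general $\mathcal G$.

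The paper avoids the issue as follows.
\begin{itemize}
\item Split $-\mathrm{k}^P_{\mathrm r}(\hat P_{\mathrm a}-D)=-\mathrm{k}^P_{\mathrm r}\hat P_{\mathrm a}+\mathrm{k}^P_{\mathrm r}D$. Then each $\dot\delta^P_l$ contains $-\mathrm{k}^P_{\mathrm r}\delta^P_l$, so $\dot V$ contains the componentwise term $-\mathrm{k}^P_{\mathrm r}(T^P)^\top J_T^P\xi^P$.
\item The $\dot R$ term is bounded by $\lambda(T^P)^\top J_T^P\xi^P$, because $|\dot\rho_l/\rho_l|<\lambda$.
\item The cross terms from $\mathrm{k}^P_{\mathrm r}D+\dot D$ and from the measurement errors are inner products with $\mathrm{B}R^{-1}J_T^PT^P$. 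Young's inequality therefore lets the consensus term absorb them completely. The extra term $\frac{\mathrm{k}^P\mathrm{m}\mathrm{c}_1}{2\alpha}\|\mathrm B\|^2e^{-\alpha t}$ in $V$ cancels the decaying part of the threshold.
\end{itemize}
This gives $\dot V\le-(\mathrm{k}^P_{\mathrm r}-\lambda)(T^P)^\top J_T^P\xi^P+C$. Lemma~\ref{lepre} then bounds $T^P$ on any topology, at the price of the gain condition $\mathrm{k}^P_{\mathrm r}>\lambda$.

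If you prefer to keep your own route, you can repair the cycle issue on connected graphs. Let $\tilde x=\hat P_{\mathrm a}-\frac{\mathbf 1^\top\hat P_{\mathrm a}}{\mathrm n}\mathbf 1$. Then $\tilde x^\top\mathrm{B}R^{-1}J_T^PT^P=\sum_l\xi^P_lJ_T(\xi^P_l)T(\xi^P_l)$, and $\|\tilde x\|\le\sqrt{\mathrm n}\,\mathrm{Diam}(\mathcal G)\rho_0$ while $\xi^P\in\Omega_\xi$. Cauchy--Schwarz then gives the coercivity you need. As written, however, your proposal does not contain this step.

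\textbf{Two smaller points.} First, your \emph{midpoint argument} for the factor $\frac12$ in \eqref{11} is invalid, because the mean of the $\hat P_{\mathrm a,j}$ need not be the midpoint of their range. In the limit, the edge bounds only give $|\hat P_{\mathrm a,i}-\frac{\mathbf 1^\top\hat P_{\mathrm a}}{\mathrm n}|\le\frac{\mathrm n-1}{\mathrm n}\mathrm{Diam}(\mathcal G)\rho_\infty$. For example, take a five-node star with the center at $0$, one leaf at $\rho_\infty$ and three leaves at $-\rho_\infty$; the limiting edge bounds allow this configuration. The first leaf is then $\frac75\rho_\infty$ from the mean, whereas $\frac{\mathrm{Diam}(\mathcal G)\rho_\infty}{2}=\rho_\infty$. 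The paper's Phase C simply asserts the $\frac12$ bound, so there is no argument there to borrow. This step needs a different justification or a weaker constant.

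Second, your worry that $h^P_{ij}(t^{ij}_{P,k})$ and $h^P_{ji}(t^{ji}_{P,k})$ need not cancel in the average dynamics is well founded. The paper just writes $\dot e^P=-\mathrm{k}^P_{\mathrm r}e^P$, which tacitly assumes that cancellation, so on this point you are not missing anything from the paper. Your Zeno argument is essentially the paper's \eqref{16}--\eqref{17}.
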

	\begin{proof}
		Let’s first define some stack vectors used in the proof. Assuming there are $\mathrm{m}$ edges in ${\cal G}$ and edge $(i,j)$ is associated with the $l^{th}$ edge, define $\delta_{l}^{P}=\hat{P}_{\mathrm{a},i}-\hat{P}_{\mathrm{a},j}$ and its stack vector $\delta^{P}=\mathrm{col}(\delta_{1}^{P},\delta_{2}^{P},\cdots,\delta_{\mathrm{m}}^{P})$. Hence, the
		performance function {$\rho^P_{ij}$} associated with $\delta_l$ can be denoted as $\rho_l$. Let $\mathrm{B}$ denote the incidence matrix of vertices and edges in $\cal G$. Thus, $\delta^P=\mathrm{B}\hat{P}_{\mathrm{a},i}$. And define a normalized error vector $\xi^P$ as
		$$\xi^P=\mathrm{col}(\frac{\delta_1^P}{\rho_1(t)},\cdots,\frac{\delta_\mathrm{m}^P(t)}{{\rho_\mathrm{m}(t)}})=R^{-1}(t)\delta^P,$$
		where $R=\mathrm{diag}([\rho_1,\cdots,\rho_\mathrm{m}])$. Applying the mapping \eqref{6}, one can obtain
		$$T^P=\mathrm{col}(T(\xi_1^P),T(\xi_2^P),\cdots,T(\xi_\mathrm{m}^P)).$$
		Then the Jacobian derivatives of $T^P$ can be derived from \eqref{7} as follow,
		$$J_T^P=\mathrm{diag}([J_T(\xi_1^P),\cdots,J_T(\xi_\mathrm{m}^P)]).$$
		Under the action of \eqref{9a}, the power sharing state {estimator} follows the following dynamic in compact form
		\begin{equation}
			\label{12}
			\dot{\hat{P}}_{\mathrm{a}}=\dot{D}-{\mathrm{k}^P_{\mathrm{r}}}(\hat{P}_{\mathrm{a}}-D)-{\mathrm{k}^P}\mathrm{B}R^{-1}J_{T}^{P}T^{P}-\mathrm{kB}\tilde{h}(t),
		\end{equation}
		where $\tilde{h}(t)=\mathrm{col}(\sum_{j\in\mathcal{N}_{i}}\tilde{h}_{1j}(t),\cdots,\sum_{j\in\mathcal{N}_{i}}\tilde{h}_{\mathrm{n}j}(t))$, $\hat{P}_{\mathrm{a}}=\mathrm{col}(\hat{P}_{\mathrm{a},1},\hat{P}_{\mathrm{a},2},\cdots,\hat{P}_{\mathrm{a},\mathrm{n}})$. Based on this, find the first derivative of $\xi^{P}$ as
		\begin{equation}
			\label{13}
			\begin{aligned}
				\dot{\xi}^P&=f(t,\xi^P) \\
				&=R^{-1}(t)(\dot\delta^{P}-\dot{R}(t)\xi^{P}) \\
				&=R^{-1}(t)(\mathrm{B}^{\mathrm{T}}(\dot{D}-{\mathrm{k}^P_{\mathrm{r}}}(\hat{P}_{\mathrm{a}}-D)-{\mathrm{k}^P}\mathrm{B}R^{-1}J_{T}^{P}T^{P} \\
				&-{\mathrm{k}^P}\mathrm{B}\tilde{h}(t))-\dot{R}(t)\xi^{P}).
			\end{aligned}
		\end{equation}
		In addition, define an open set as follow
		$$\Omega_\xi=\underbrace{(-1,1)\times(-1,1)\cdots\times(-1,1)}_{\text{m-times}}.$$
		\par Next, Theorem \ref{Th1} will be proved in three phases. Firstly, \textcolor{blue}{it can be proved} that the unique maximal solution of the dynamical system \eqref{13} exists over the set $\Omega_\xi$ in a finite time set $[0, t_\mathrm{max})$. Secondly, we provide proof that the solution $\xi^P$of the dynamical system \eqref{13} always evolves over a set $\Omega_\xi$ in a finite time set $[0, t_\mathrm{max})$, and at the same time, apply the proof by contradiction to deduce that the above conclusion still holds when $t_\mathrm{max} = +\infty$. In this way, one can get $|\delta_l^P|<\rho_l$ for $\forall t > 0 $ and $l \in\{1, 2, \cdots, \mathrm{m}\}$. At this point, the consensus of power sharing estimation can be achieved with the prescribed dynamic and steady-state performance specifications. Then, the asymptotic convergence characteristics of the tracking error $e^{\dot{P}}=\frac{1^{T}}{\mathrm{n}}(\hat{P}_{\mathrm{a}}-D)$ of the designed estimator are demonstrated. Finally, Zeno behavior of the designed estimator is analyzed by using the the Proof by contradiction. Thus, Theorem \ref{TH1} is fully proven.
		\par \emph{Phase A}. By pre-setting the performance function $\rho_l(t)$ to satisfy $\rho_l(0) > |\delta_l(0)|$ for all $l\in\{1, 2, \cdots, \mathrm{m}\},$ it can be obtained that $|\xi_l^P(0)|=\frac{\delta_l(0)}{\rho_l(0)}<1$, which implies $\textcolor{blue}{\xi^P(0)}\in\Omega_\xi $. According to \eqref{13}, $f(t, \xi^P)$ behaves piecewise continuous, locally integrable on $t$, and locally Lipschitz on $\xi^P$. Hence, referring to Lemma \ref{lem1}, \eqref{13} must have a maximal solution on a finite time interval $[0, t_\mathrm{max})$, and $\xi^P\in\Omega_\xi$ for $t\in[0, t_\mathrm{max})$.
		\par \emph{Phase B}. {According to the result in \emph{Phase A}}, $-1<\xi^P_l<1$ for $\forall t\in[0, t_\mathrm{max})$ and $\forall l\in \{1, 2,\cdots, \mathrm{m}\}$. Next, {select} a Lyapunov function as shown below
		$$V=\frac{1}{2}(T^P)^\mathrm{T}T^P+\frac{\mathrm{kmc}_1}{2\alpha}\|\mathrm{B}\|^2\mathrm{e}^{-\alpha t},$$
		and find its derivative as
		$$\begin{aligned}
			\dot{V}=& (T^{P})^{\mathrm{T}}J_{T}^{P}R^{-1}(t)(\mathrm{B}^{\mathrm{T}}(\dot{D}-{\mathrm{k}^P_{\mathrm{r}}}(\hat{P}_{\mathrm{a}}-D)  \\
			&-{\mathrm{k}^P}\mathrm{B}R^{-1}(t)J_{T}^{P}T^{P}-{\mathrm{k}^P}\mathrm{B}\tilde{h}(t))-\dot{R}(t)\xi^{P}) \\
			&-\frac{{\mathrm{k}^P}\mathrm{mc}_1}2\|\mathrm{B}\|^2\mathrm{e}^{-\alpha t} \\
			&\text{=} -{\mathrm{k}^P_{\mathrm{r}}}(T^{P})^{\mathrm{T}}J_{T}^{P}R^{-1}(t)\mathrm{B}^{\mathrm{T}}\hat{P}_{\mathrm{a}}-(T^{P})^{\mathrm{T}}J_{T}^{P}R^{-1}(t)\dot{R}(t)\xi^{P}  \\
			&-{\mathrm{k}^P}(T^P)^\mathrm{T}J_T^PR^{-1}(t)\mathrm{B}^\mathrm{T}\mathrm{B}R^{-1}(t)J_T^PT^P \\
			&-{\mathrm{k}^P}(T^{P})^{\mathrm{T}}J_{T}^{P}R^{-1}(t)\mathrm{B}^{\mathrm{T}}\mathrm{B}\tilde{h}(t)-\frac{{\mathrm{k}^P}\mathrm{mc}_{1}}{2}\|\mathrm{B}\|^{2}\mathrm{e}^{-\alpha t} \\
			&+(T^P)^\mathrm{T}J_TR^{-1}(t)\mathrm{B}^\mathrm{T}({\mathrm{k}^P_{\mathrm{r}}}D+\dot{D}).
		\end{aligned}$$
		Applying $\xi^{P}=R^{-1}(t)\mathrm{B}^{\mathrm{T}}\hat{P}_{\mathrm{a}},|\frac{\dot{\rho}_{l}}{\rho_{l}}|<\lambda $, and properties of total load $D$, \textcolor{blue}{one} can get
		$$\begin{aligned}
			\dot{V}\leq & -{\mathrm{k}^P_{\mathrm{r}}}(T^{P})^{\mathrm{T}}J_{T}^{P}\xi^{P}+\lambda(T^{P})^{\mathrm{T}}J_{T}^{P}\xi^{P}  \\
			&-{\mathrm{k}^P}(T^{P})^{T}J_{T}^{P}R^{-1}(t)\mathrm{B}^{\mathrm{T}}\mathrm{B}R^{-1}(t)J_{T}^{P}T^{P} \\
			&+\sup_{t\geq0}(\|{\mathrm{k}^P_{\mathrm{r}}}D+\dot{D}\|)\|T^{P}J_{T}^{P}R^{-1}(t)\mathrm{B}^{\mathrm{T}}\| \\
			&+{\mathrm{k}^P}\|\mathrm{B}\|\|\tilde{h}(t)\|\|T^{P}J_{T}^{P}R^{-1}(t)\mathrm{B}^{\mathrm{T}}\| \\
			&-\frac{{\mathrm{k}^P}\mathrm{mc}_1}2\|\mathrm{B}\|^2\mathrm{e}^{-\alpha t}.
		\end{aligned}$$
		Considering the event triggering mechanism \eqref{10}, it can be obtained that
		\begin{equation}
			\label{14}
			\|\tilde{h}(t)\|^2\leq\mathrm{mc}_0+\mathrm{mc}_1\mathrm{e}^{-\alpha t},\forall t\in\underset{j\in\mathcal{N}_i}{\cup}[t_k^{ij},t_{k+1}^{ij}).
		\end{equation}
		Thus, according to the Young’s inequality, the derivative of $V$ can be derived as
		\begin{equation}
			\label{15}
			\begin{aligned}
				\dot{V}\leq&-({\mathrm{k}^P_{\mathrm{r}}}-\lambda)(T^{P})^{\mathrm{T}}J_{T}^{P}\xi^{P}\\&+\frac{\sup_{t\geq0}(\|{\mathrm{k}^P_{\mathrm{r}}}D+\dot{D}\|)^{2}}{2\mathrm{k}}+\frac{{\mathrm{k}^P}\mathrm{mc}_{0}}{2}\|\mathrm{B}\|^{2}.
			\end{aligned}
		\end{equation}
		If $\mathrm{k_{r}}>\lambda $, according to Lemma \ref{lem1}, there must be a positive
		constant $\bar{\epsilon}$ such that $\dot{V}\leq 0$ for $\|T^P\|>\bar{\epsilon}$, which dictates that
		$|T_l(\xi^P)|\leq\epsilon^*=\max\{\|\xi^P(0)\|,\bar{\epsilon}\},\forall l\in\{1,\cdots,\text{m}\},$
		within $t\in[0, t_\mathrm{max})$. Applying the inverse function of the hyperbolic tangent function, we have
		$$|\xi_l^P(t)|\leq\xi^*=\tanh(\epsilon^*)<1,$$
		which implies that $\xi_l^P$ always evolves in $\Omega_\xi$ for $\forall t \in [0, t_\mathrm{max})$ and $l\in\{1, 2, \cdots, \mathrm{m}\}$.
		\par Thus, there must exist a compact and nonempty subset of $\Omega_\xi$, denoted as $\Omega'_\xi$, such that $\xi_l^P\in\Omega'_\xi$ for $\forall t \in [0, t_\mathrm{max})$. However, according to Proposition \ref{pro2}, there must exist \textcolor{blue}{an} instant $t'$ 
		such that $\xi^{P}(t^{\prime})\notin\Omega_{\xi}^{'}$ if $t_{\mathrm{max}}<+\infty $. At this point, a
		clear contradiction arises. As a result, $t_\mathrm{max}$ can be extended to $+\infty$. That is, $|\delta_l^P|<\rho_l$ always holds for $\forall t\in[0,+\infty)$ and $l\in\{1,2,\cdots,\mathrm{m}\}$. Up to this point, the consensus errors are successfully constrained within the specified range.
		\par \emph{Phase C}. From \emph{Phase B}, it can be inferred that the average consensus error can be expressed as
		$$\lim\limits_{t\to+\infty}|\hat{P}_{\mathrm{a},i}-\frac{\mathbf{1}^\mathrm{T}}{\mathrm{n}}\hat{P}_{\mathrm{a}}|\leq\frac{\mathrm{Diam}(\mathcal{G})\rho_{\infty}}{2},i=1,\cdots,\mathrm{n}$$
		with exponential convergence rate $\lambda$. The tracking errors behave $\dot{e}^{P}=-{\mathrm{k}^P_{\mathrm{r}}}e^{P}$. Hence, only if ${\mathrm{k}^P_{\mathrm{r}}>\lambda}$, the steady-state
		performance complies with the following description
		$$\lim\limits_{t\to+\infty}|\hat{P}_{\mathrm{a},i}-\frac{\mathbf{1}^\mathrm{T}}{\mathrm{n}}D|\leq\frac{\mathrm{Diam}(\mathcal{G})\rho_\infty}{2},i=1,\cdots,\mathrm{n}$$
		with exponential convergence rate $\lambda$.
		\par \emph{Phase D}. To ensure the feasibility of the algorithm in practical applications, Zeno behavior needs to be excluded. So, next, we will use the Proof by contradiction to analyze the number of triggers in any finite time {interval}.
		\par Assume that there exists a finite time instant $t_{e}$ such that $\lim_{k\to+\infty}t_{k}^{ij}=t_{e}$ for any pair $(i, j)$ agents. When Zeno behavior occurs,
		$$|{\tilde{h}_{P,ij}}(t_{k+1}^{ij-})|=\sqrt{\mathrm c_0+\mathrm c_1\mathrm e^{-\alpha t_{k+1}^{ij}}}\geq\sqrt{\mathrm c_0+\mathrm c_1\mathrm e^{-\alpha t_e}}.$$
		Besides, $\tilde{h}_{ij}(t)$ for any pair $(i, j)$ behaves continuously differentiable, and $\dot{\tilde{h}}_{ij}(t)$ is bounded, i.e., $\|{\dot{\tilde h}_{P,ij}}(t)\|<\bar{h}_{ij}$ for $\forall t>0$. In view of this, we arrive at
		$$|{{\tilde h}_{P,ij}}(t_{k+1}^{ij-})|\leq\int_{t_k^{ij}}^{t_{k+1}^{ij}}|{\dot{\tilde h}_{P,ij}}(\tau)|d\tau\leq(t_{k+1}^{ij}-t_k^{ij})\bar{h}_{ij}.$$
		Combining the above two inequalities, we can get
		\begin{equation}
			\label{16}
			t_{k+1}^{ij}-t_k^{ij}\geq\frac{\sqrt{\mathrm{c}_0+\mathrm{c}_1\mathrm{e}^{-\alpha t_e}}}{\bar{h}_{ij}},
		\end{equation}
		which implies that there will not be an infinite number of triggers occurring within any finite time period. Hence, the proposed scheme behaves Zeno-free.
		\par With the help of \eqref{14}, the interval between any two adjacent triggering instants satisfies the following inequality
		\begin{equation}
			\label{17}
			t_{k+1}^{ij}-t_k^{ij}\geq\frac{\sqrt{\mathrm{c}_0+\mathrm{c}_1\mathrm{e}^{-\alpha t_k^{ij}}}}{\bar{h}_{ij}},
		\end{equation}
		which means that the triggering scheme \eqref{10} will not cause Zeno behavior.
		\par So far, the proof of Theorem \ref{TH1} \textcolor{blue}{is completed}. \hfill$\blacksquare$
	\end{proof}
	{\begin{remark}
			According to Theorem \ref{TH1}, it can be seen that selecting different values of $\rho_0$, $\rho_\infty$, and $\lambda$ in the performance function $\rho^P(t)$ can to some extent adjust the dynamic and steady-state performance of the involved schemes. Next, let's first discuss the impact of $\mathrm{k}_\mathrm{r}^P$ and $\mathrm{k}^P$ on ${\hat P}_a$. According to \eqref{15}, an increase in $\mathrm{k}_\mathrm{r}^P$ can lead to a smaller $\dot V$, which is beneficial for achieving faster consensus on the estimated power sharing state. Furthermore, according to $\dot{e}^{P}=-\mathrm{k}_\mathrm{r}^Pe^{P}$, increasing $\mathrm{k}_\mathrm{r}^P$ can accelerate the convergence of the estimated average power sharing state to the average load. However, the impact of $\mathrm{k}^P$ on the results is not clear. According to \eqref{15}, the impact of increasing $\mathrm{k}^P$ on $\dot V$ should be further discussed in conjunction with the configuration of the load and communication network. Generally speaking, for small and medium-sized power networks, $\frac{\sup_{t\geq0}(\|\mathrm{k}^P_{\mathrm{r}}D+\dot{D}\|)^{2}}{2\mathrm{k}^P}\gg\frac{\mathrm{k}^P\mathrm{mc}_{0}}{2}\|\mathrm{B}\|^{2}$. So, an arbitrary conclusion can be drawn that increasing $\mathrm{k}^P$ can lead to a smaller $\dot V$, thereby accelerating the consensus rate of ${\hat P}_{\mathrm{a},i}$ for $\forall i\in\{1,2,\cdots,\mathrm{n}\}$.
	\end{remark}}
	\begin{remark}
		Although Zeno-free behavior can be guaranteed, {the parameters of the event triggering mechanism \eqref{10} have a significant impact on the triggering frequency and dynamic response of $\hat{P}_{\text{a},i}$}. Below is an explanation {for} these. Given by a smaller parameter $\mathrm{c}_0$, the measured error $\tilde{h}_{ij}(t)$ and the upper bound $\bar{\epsilon}$ of the error \textcolor{blue}{$T^P_i$ for $\forall i\in\{1,2,\cdots,\mathrm{m}\}$} become smaller observed from \eqref{10} and \eqref{14}\textcolor{blue}{, which combing} with Lemma \ref{lepre} \textcolor{blue}{results} in a smaller the error $T^P_i$. But the impact of an increasing $\mathrm{c}_0$ on the results is uncertain, which can be inferred from \eqref{15}. Specifically, the increase in $\mathrm{c}_0$ leads to an increase in $\tilde{h}_{ij}(t)$ and $\bar{h}_{ij}$, resulting in both the numerator and denominator of the lower bound of the trigger time interval in \eqref{17} increasing simultaneously. Further, due to the fact that the actual response is constrained by the load of each bus and its derivative, the trial and error method \textcolor{blue}{needs to} be applied to illustrate the impact on the triggering frequency.
	\end{remark}
	\par {Similarly, there is a theorem as follow for the designed SoC balance state estimator \eqref{9c} and \eqref{9d} with \eqref{10}. For simplicity purposes, \textcolor{blue}{the} proof is omitted due to its similarity to Theorem \ref{TH1}.}
	\begin{theorem}
		\label{TH21}
		Consider a microgrid containing $\mathrm{n}$ BESSs, and SoC and output power of each BESS are subjected to \eqref{1} and \eqref{2} respectively. Under the control strategy {\eqref{9c} and \eqref{9d} with \eqref{10}, the average SoC state} can be estimated {by agent $i$} for all $i\in\{1,\cdots,n\}$ with steady-state error as follow,
		\begin{equation}
			\label{18}
			\lim\limits_{t\to+\infty}\vert \hat E_{a,i}-E_a\vert\le \frac{\mathrm{Diam}(G)\rho_\infty}{2}
		\end{equation}
		and Zeno behavior can be avoided. In particular, {consensus error also has prescribed dynamic and steady performance defined in the performance function $\rho^E_{ij}(t)$}.
	\end{theorem}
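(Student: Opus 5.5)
The plan is to mirror the four-phase argument of Theorem \ref{TH1}, replacing $(\hat P_{a}, D, \mathrm{k}^P_{\mathrm r}, \mathrm{k}^P, \mathrm{c}_0,\mathrm{c}_1,\alpha)$ by $(\hat E_{a}, E, \mathrm{k}^E_{\mathrm r}, \mathrm{k}^E, \mathrm{d}_0,\mathrm{d}_1,\beta)$. The only structural difference is that the tracked signal is no longer an exogenous load satisfying Assumption \ref{assum2}. It is the SoC vector $E$, whose derivative is $\dot E=-\frac{1}{\mathrm T}P$ by \eqref{1}, and $P$ itself depends on the estimators through \eqref{Pcal}. So the first step is to check that $E$ and $\dot E$ are bounded. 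This is the regularity property of $D$ used in \eqref{15} and in Phase D.

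\textbf{Boundedness of $E$ and $\dot E$.} By the Assumption on SoC, $E_i\in[\mathrm a_1,\mathrm a_2]$, so $E$ is bounded. For $\dot E$, I will use \eqref{Pcal}. The denominator is $\max\{\mathrm a_1,\hat E_{a,i}\}\ge \mathrm a_1>0$, so $|P_i|\le \frac{\mathrm a_2}{\mathrm a_1}|\hat P_{a,i}|$. Theorem \ref{TH1} gives that $\hat P_{a,i}$ stays within a bounded distance of the bounded $D_a$, so $\hat P_{a,i}$ is bounded uniformly in time. Hence $\sup_{t\ge0}\|\mathrm{k}^E_{\mathrm r}E+\dot E\|<\infty$, and $E$ is continuous and piecewise differentiable. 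I expect this to be the main obstacle. The power estimator is driven only by $D$, so it is decoupled from the SoC estimator and can be analysed first. The coupling therefore runs one way only, but the bound on $\hat P_{a,i}$ must hold for all $t\ge0$, including the transient. I would secure this from Phase B of Theorem \ref{TH1}: the consensus errors satisfy $|\delta^P_l|<\rho_l\le\rho_0$, and $e^P$ decays exponentially, so $\hat P_{a,i}$ is bounded by $\sup|D_a|+|e^P(0)|+\mathrm{Diam}(\mathcal G)\rho_0$.

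\textbf{Phases A--B.} With $\xi^E=R_E^{-1}\mathrm B^{\mathrm T}\hat E_a$ and $\rho^E_l(0)>|\delta^E_l(0)|$, Lemma \ref{lem1} gives a maximal solution in $\Omega_\xi$ on $[0,t_{\max})$. The right-hand side is locally Lipschitz in $\xi^E$ and piecewise continuous in $t$ because of the sampled terms. I will take $V_E=\frac12 (T^E)^{\mathrm T}T^E+\frac{\mathrm{k}^E\mathrm{m d}_1}{2\beta}\|\mathrm B\|^2e^{-\beta t}$. The trigger rule \eqref{10b} gives $\|\tilde h_E\|^2\le \mathrm{m d}_0+\mathrm{m d}_1e^{-\beta t}$. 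With Young's inequality and $|\dot\rho_l/\rho_l|\le\lambda$, this yields the analogue of \eqref{15}:
$$\dot V_E\le-(\mathrm{k}^E_{\mathrm r}-\lambda)(T^E)^{\mathrm T}J^E_T\xi^E+\frac{\sup_t\|\mathrm{k}^E_{\mathrm r}E+\dot E\|^2}{2\mathrm{k}^E}+\frac{\mathrm{k}^E\mathrm{m d}_0}{2}\|\mathrm B\|^2 .$$
For $\mathrm{k}^E_{\mathrm r}>\lambda$, Lemma \ref{lepre} bounds $T^E$ uniformly. Then $|\xi^E_l|\le\tanh(\epsilon^*)<1$, and Proposition \ref{pro2} forces $t_{\max}=+\infty$. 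This gives $|\delta^E_l|<\rho^E_l$ for all $t$, which is the prescribed consensus performance.

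\textbf{Phases C--D.} Summing \eqref{9c} over $i$, the consensus terms cancel because $\mathbf 1^{\mathrm T}\mathrm B=0$. This uses the symmetric pairing $h_{ij}=-h_{ji}$, with samples taken on each link as in Theorem \ref{TH1}. The result is $\dot e^E=-\mathrm{k}^E_{\mathrm r}e^E$ with $e^E=\frac{\mathbf 1^{\mathrm T}}{\mathrm n}(\hat E_a-E)$. Combining this with the Phase B bound, through the path of length at most $\mathrm{Diam}(\mathcal G)$ between any two agents, gives \eqref{18}. For Zeno exclusion, boundedness of $\xi^E$ away from $\pm1$ together with bounded $\dot E$ makes $\dot{\tilde h}_{E,ij}$ bounded. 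Hence $t^{ij}_{E,k+1}-t^{ij}_{E,k}\ge\sqrt{\mathrm d_0+\mathrm d_1e^{-\beta t^{ij}_{E,k}}}/\bar h_{ij}>0$, exactly as in \eqref{17}.
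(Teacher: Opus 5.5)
Your route is the paper's route. The paper omits this proof and defers to the four phases of Theorem~\ref{TH1}, and your substitution of $(\hat E_{\mathrm a},E,\mathrm k^E_{\mathrm r},\mathrm k^E,\mathrm d_0,\mathrm d_1,\beta)$ for the power-estimator quantities does exactly that. Your boundedness check is correct and supplies what the paper leaves implicit. Since $\max\{\mathrm a_1,\hat E_{\mathrm a,i}\}\ge\mathrm a_1$, \eqref{Pcal} gives $|P_i|\le\frac{\mathrm a_2}{\mathrm a_1}|\hat P_{\mathrm a,i}|$ independently of the SoC estimator. Phases B--C of Theorem~\ref{TH1} bound $\hat P_{\mathrm a,i}$ for all $t\ge0$. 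Hence $\sup_t\|\mathrm k^E_{\mathrm r}E+\dot E\|<\infty$ can take the place of Assumption~\ref{assum2} in \eqref{15} and in the Zeno estimate. The cancellation in Phase C is also sound, but it needs a line of justification rather than ``as in Theorem~\ref{TH1}''. Because $h^E_{ji}=-h^E_{ij}$, the two directed conditions \eqref{10b} on a link are identical (starting from a common $t_0$). So $t^{ij}_{E,k}=t^{ji}_{E,k}$, and the sampled consensus terms still sum to zero.

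The step that fails is the last one in Phase C, where you say the path argument gives \eqref{18}. From $|\delta^E_l|<\rho^E_l$ you get $|\hat E_{\mathrm a,i}-\hat E_{\mathrm a,j}|<d(i,j)\rho^E(t)$, with $d(i,j)$ the graph distance, and hence only
\[
\Bigl|\hat E_{\mathrm a,i}-\tfrac{1}{\mathrm n}\mathbf 1^{\mathrm T}\hat E_{\mathrm a}\Bigr|\le\tfrac{1}{\mathrm n}\sum_{j\ne i}d(i,j)\rho^E(t)\le\tfrac{\mathrm n-1}{\mathrm n}\mathrm{Diam}(\mathcal G)\rho^E(t).
\]
The factor $\tfrac12$ would require the average to sit at the midpoint of the spread, which it need not. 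Take a star with centre $c$ and leaves $1,2,3$, so $\mathrm{Diam}(\mathcal G)=2$, and estimates $x_c$, $x_c+\rho'$, $x_c-\rho'$, $x_c-\rho'$ with $\tfrac45\rho<\rho'<\rho$. Every edge constraint holds and the average is $x_c-\rho'/4$. Leaf $1$ is then $\tfrac54\rho'>\rho=\mathrm{Diam}(\mathcal G)\rho/2$ away from the average. Nothing in Phases A--C excludes this configuration. In fact, with constant, widely spread SoCs and $\mathrm k^E_{\mathrm r}\gg\mathrm k^E$, the equilibrium of \eqref{9c} under continuous communication is close to it. Note also that you yourself used the bound without the $\tfrac12$ when bounding $\hat P_{\mathrm a,i}$. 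The unsupported constant comes from the paper's own Phase C, so your proof is as complete as the paper's. What it actually establishes is $\limsup_{t\to\infty}|\hat E_{\mathrm a,i}-E_{\mathrm a}|\le\frac{\mathrm n-1}{\mathrm n}\mathrm{Diam}(\mathcal G)\rho_\infty$. A minor slip is also copied from \eqref{17}: the trigger threshold is evaluated at $t^{ij}_{E,k+1}$, so the inter-event bound should carry $t^{ij}_{E,k+1}$ in the exponent, or simply read $\sqrt{\mathrm d_0}/\bar h_{ij}$. Either form still rules out Zeno behaviour.
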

	\subsection{A Self Triggering Mechanism for Intermittent Communication}
	\par Although the event triggering mechanism \eqref{10} can ensure the prescribed performance of the estimators without Zeno behavior, it requires continuous real-time monitoring of measurement errors, which is not {a} truly intermittent communication {scheme}. So, another conservative strategy \textcolor{blue}{is developed here}, i.e.,
	\begin{equation}
		\label{19}
		{t_{P,k+1}^{ij}}=\inf_{t>{t_{P,k}^{ij}}}\left\{t\in\mathbb{R}_{+}:(t-{t_{P,k}^{ij}})^{2}\overline{h}_{ij}^{2}\geq\mathrm{c}_{0}+\mathrm{c}_{1}\mathrm{e}^{-\alpha t}\right\},
	\end{equation}
	{where $\overline{h}_{ij}$ is the upper bound of $\|\dot{\tilde{h}}_{ij}(t)\|$ according to \eqref{16}.} Additionally, \eqref{19}, which indicates that the next triggering instant ${t_{P,k+1}^{ij}}$ only depends on the information at the current instant ${t_{P,k}^{ij}}$, is a self triggering mechanism that does not require continuous communication {and can also be called a time-triggered mechanism. Based on Theorem \ref{TH1} and \ref{TH21}, and \eqref{17}, the following corollary is given without proof.}
	{\begin{corollary}
			Under the control protocol \eqref{9a}-\eqref{9d} with \eqref{19}, the power sharing state and SoC balance state can be well estimated, i.e.,
			$$\lim\limits_{t\to+\infty}\vert \hat E_{a,i}-E_a\vert\le \frac{\mathrm{Diam}(G)\rho_\infty}{2},$$
			$$\lim\limits_{t\to+\infty}\vert \hat P_{a,i}-P_a\vert\le \frac{\mathrm{Diam}(G)\rho_\infty}{2}.$$
			Meanwhile, consensus error has the prescribed dynamic and steady performance, and Zeno behavior can be excluded.
	\end{corollary}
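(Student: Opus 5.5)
The plan is to reduce the corollary to Theorem \ref{TH1} and Theorem \ref{TH21}. The point to establish is that the self triggering rule \eqref{19} is only a more conservative way of enforcing the event triggering condition \eqref{10}. Once this holds, the Lyapunov and invariance arguments of Phases A--C apply unchanged.

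\textbf{Step 1: the self-triggered schedule keeps the measurement errors within the event-triggered threshold.} Fix a link $(i,j)$ and an inter-event interval $[t_{P,k}^{ij},t_{P,k+1}^{ij})$ generated by \eqref{19}. Since $\tilde h_{P,ij}(t_{P,k}^{ij})=0$, we have
$$|\tilde h_{P,ij}(t)|\le\int_{t_{P,k}^{ij}}^{t}|\dot{\tilde h}_{P,ij}(\tau)|d\tau\le (t-t_{P,k}^{ij})\bar h_{ij}.$$
By the definition of $t_{P,k+1}^{ij}$ in \eqref{19}, $(t-t_{P,k}^{ij})^2\bar h_{ij}^2<\mathrm c_0+\mathrm c_1\mathrm e^{-\alpha t}$ on this interval. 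Hence $\tilde h_{P,ij}^2(t)<\mathrm c_0+\mathrm c_1\mathrm e^{-\alpha t}$ there, so the bound \eqref{14}, $\|\tilde h(t)\|^2\le \mathrm{mc}_0+\mathrm{mc}_1\mathrm e^{-\alpha t}$, continues to hold for all $t$. The same argument with $\mathrm d_0,\mathrm d_1,\beta$ gives the analogous bound for $\tilde h_{E,ij}$.

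\textbf{Step 2: reuse the estimator analysis.} With \eqref{14} in hand, inequality \eqref{15} is obtained verbatim from the same Lyapunov function $V$. Phase A (existence of a maximal solution in $\Omega_\xi$) and Phase B (uniform bound $|\xi_l^P|\le\xi^*<1$ via Lemma \ref{lepre} and the contradiction with Proposition \ref{pro2}) then give $|\delta_l^P|<\rho_l$ for all $t\ge0$. This is the prescribed performance of the consensus error. Phase C uses $\dot e^P=-\mathrm k_r^P e^P$, which holds because $\mathbf 1^{\mathrm T}\mathrm B=0$ kills both the consensus term and the measurement-error term. It yields
$$\lim_{t\to+\infty}|\hat P_{a,i}-D_a|\le \frac{\mathrm{Diam}(\mathcal G)\rho_\infty}{2}.$$
Under the power allocation \eqref{Pcal} and supply-demand balance, the average load $D_a$ is the average power $P_a$ of the statement. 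The SoC statement follows identically, as in Theorem \ref{TH21}.

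\textbf{Step 3: Zeno exclusion.} This is immediate from \eqref{19}. The next instant satisfies
$$t_{P,k+1}^{ij}-t_{P,k}^{ij}=\frac{\sqrt{\mathrm c_0+\mathrm c_1\mathrm e^{-\alpha t_{P,k+1}^{ij}}}}{\bar h_{ij}}\ge \frac{\sqrt{\mathrm c_0}}{\bar h_{ij}}>0,$$
which is a uniform positive dwell time. It is even stronger than \eqref{17}.

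\textbf{Main obstacle.} The delicate point is the existence of the uniform bound $\bar h_{ij}$ on $|\dot{\tilde h}_{P,ij}|$, which \eqref{19} presupposes. Here $\dot{\tilde h}_{P,ij}=-\dot h^P_{ij}$ involves $J_T$, $T$ and their derivatives evaluated at $\xi_l^P$, together with $\dot{\hat P}_a$, $\rho_l$ and $\dot\rho_l$. I would first argue via the bootstrap of Phase B that $|\xi_l^P|\le\xi^*<1$. This keeps $J_T$, $T$ and their derivatives bounded. I would then bound $\dot{\hat P}_a$ from \eqref{12} using Assumption \ref{assum2} and the already-bounded $\tilde h$, and use $\rho_l\ge\rho_\infty>0$ and $|\dot\rho_l|\le\lambda\rho_0$.

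This argument is circular, since $\bar h_{ij}$ is used to guarantee the very bound that produces it. I would break the circularity by fixing $\bar h_{ij}$ a priori as the supremum of $|\dot h^P_{ij}|$ over the compact set $\{|\xi^P|\le\xi^*\}$. The constant $\xi^*$ depends only on $\mathrm c_0,\mathrm c_1$, the gains and the load bounds, not on $\bar h_{ij}$. I would then run a continuity and first-exit-time argument showing the trajectory never leaves that set.
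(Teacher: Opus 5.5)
Your route is the paper's own. The paper states the corollary without proof on the strength of Theorems \ref{TH1}, \ref{TH21} and \eqref{17}. Your bootstrap for $\bar h_{ij}$ is in fact more careful than the paper, which simply borrows $\bar h_{ij}$ from \eqref{16}, a bound obtained along event-triggered trajectories rather than self-triggered ones.

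The genuine gap is the steady-state constant you import from Phase C. The step from $|\delta_l^P|<\rho_l$ to $|\hat P_{\mathrm a,i}-\frac{1}{\mathrm n}\mathbf 1^{\mathrm T}\hat P_{\mathrm a}|\le\frac{\mathrm{Diam}(\mathcal G)\rho_\infty}{2}$ is not valid. On the triangle, where $\mathrm{Diam}(\mathcal G)=1$, the values $(0,\rho,\rho)$ meet every edge constraint up to the boundary, yet node 1 lies $\frac23\rho$ from the mean. The estimator does drive errors to that boundary under heterogeneous loads. Take constant $D_1=-L$, $D_2=D_3=L$, exact neighbour information and $\rho_l\equiv\rho_\infty$. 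The equilibrium of \eqref{9a}--\eqref{9b} then has mean exactly $D_{\mathrm a}$ and $\hat P_{\mathrm a,2}=\hat P_{\mathrm a,3}$. As $L/\rho_\infty\to\infty$, $|\xi^P_l|\to1$ on the edges $(1,2)$ and $(1,3)$, so $|\hat P_{\mathrm a,1}-D_{\mathrm a}|\to\frac23\rho_\infty>\frac12\rho_\infty$. Stars give similar examples.

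What your argument actually yields is obtained by summing $\delta^P_l$ along shortest paths and using $e^P\to0$: $\limsup_{t\to+\infty}|\hat P_{\mathrm a,i}-D_{\mathrm a}|\le\frac{1}{\mathrm n}\sum_j d(i,j)\rho_\infty\le\frac{\mathrm n-1}{\mathrm n}\mathrm{Diam}(\mathcal G)\rho_\infty$, with $d(i,j)$ the graph distance. The same holds for $\hat E_{\mathrm a,i}$. The factor $\frac12$ survives only for special graphs such as paths. So the corollary is provable only with this corrected constant.

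The prescribed-performance and Zeno parts go through, with some refinements.
\begin{itemize}
\item For the first-exit argument to close, define $\bar h_{ij}$ as a supremum over a slightly larger set $\{|\xi^P_l|\le\xi^{**}\}$ with $\xi^*<\xi^{**}<1$. Otherwise you have no margin at the exit time.
\item Bounding $\dot{\hat P}_{\mathrm a}$ from \eqref{12} also needs a priori bounds on $\hat P_{\mathrm a}-D$ and on $\dot D$. The first follows because $|e^P|$ is nonincreasing and the edge differences are controlled by $\xi^P$ and the load bounds. The second is not literally in Assumption \ref{assum2}, but \eqref{15} already presupposes it.
\item Your remark that the dwell time is ``even stronger than \eqref{17}'' is backwards. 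The exact inter-event time $\sqrt{\mathrm c_0+\mathrm c_1\mathrm e^{-\alpha t^{ij}_{P,k+1}}}/\bar h_{ij}$ is strictly smaller than the right-hand side of \eqref{17}, so \eqref{17} as written fails under \eqref{19}. Rely on your uniform bound $\sqrt{\mathrm c_0}/\bar h_{ij}$ instead.
\item \eqref{9a}--\eqref{9b} only see $D_i$, so the tracked quantity is $D_{\mathrm a}$. Identifying it with $P_{\mathrm a}=\frac{1}{\mathrm n}\sum_iP_i$ ``by supply--demand balance'' is circular. That balance is a consequence of accurate estimates through \eqref{Pcal}, and it holds only approximately. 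Read $P_{\mathrm a}$ in the corollary as $D_{\mathrm a}$, as in Problem \ref{problem} and Theorem \ref{TH1}.
\end{itemize}
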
}
	\begin{remark}
		\label{rem2}
		Although $\hat{P}_{\mathrm{a},i}$ and $\hat{E}_{\mathrm{a},i}$ exhibit Zeno-free behavior under the self-triggering mechanism \eqref{19}, communication may be triggered more and more frequently over time. This is because the time interval that satisfies \eqref{19} becomes shorter over time, which leads to earlier communication times. So, this scheme demonstrates conservatism. Additionally, based on simple monotonicity judgment, the measurement error is positively correlated with the modulation error. This indicates that $\overline{h}_{ij}$ should satisfy $\overline{h}_{ij}\geq\tilde{h}(t)$ with $\xi^{P}=\bar{\xi}$. Following this approach, $\frac{\sup_{t\geq0}(\|{\mathrm{k}^P_\mathrm{r}}\dot{D}+\dot{D}\|)^{2}}{2\mathrm{k}}+\frac{{\mathrm{k}^P}\mathrm{mc_{0}}}{2}\|\mathrm{B}\|^{2}$ is needed to calculate $\bar{\xi}$ through \eqref{15}. Otherwise, this self triggering mechanism may become more conservative.
	\end{remark}
	\section{Some Simulation Cases}
	\par In this section, {five} cases are designed to test the performance of the designed schemes. In Case 1, the event triggering scheme is tested to demonstrate its effectiveness. To illustrate the progressiveness, {an existing event triggering scheme} developed in \cite{Qian2023} is used to compare with the event triggering scheme {in Case 2}. {Next,} the {Play-and-Plug function and scalability are scheduled for testing in Case 3 and 4, respectively}. {Finally, the involved self triggering scheme is tested in Case 5.} The resistance network used and the communication network of agents are shown in Fig. \ref{comm}. {This DC power system is controlled by droop controllers and adopts a single bus with four sections, each containing a BESS and load.}
	\begin{figure}
		\centering
		\includegraphics[width=8cm]{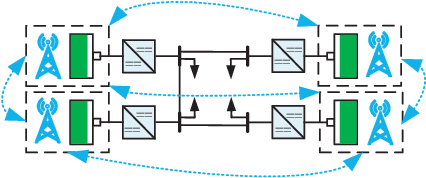}
		\caption{Four BESSs in a resistance network and their communication graph
			topology}
		\label{comm}
	\end{figure}
	\begin{figure}
		\centering
		\includegraphics[width=8cm]{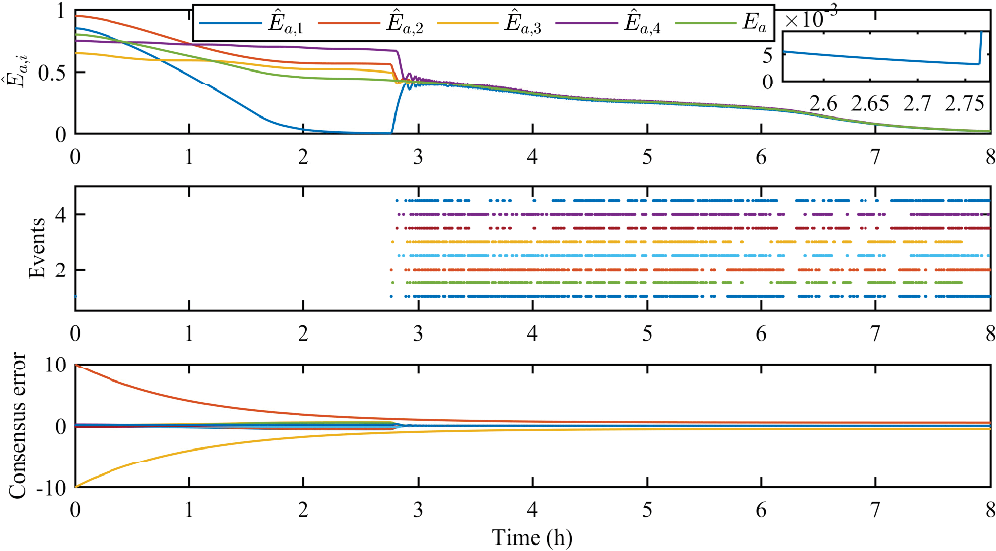}
		\caption{Evolution of SoC balance estimation, events, and consensus error in Case 1.}
		\label{case1a}
	\end{figure}
	\begin{figure}
		\centering
		\includegraphics[width=8cm]{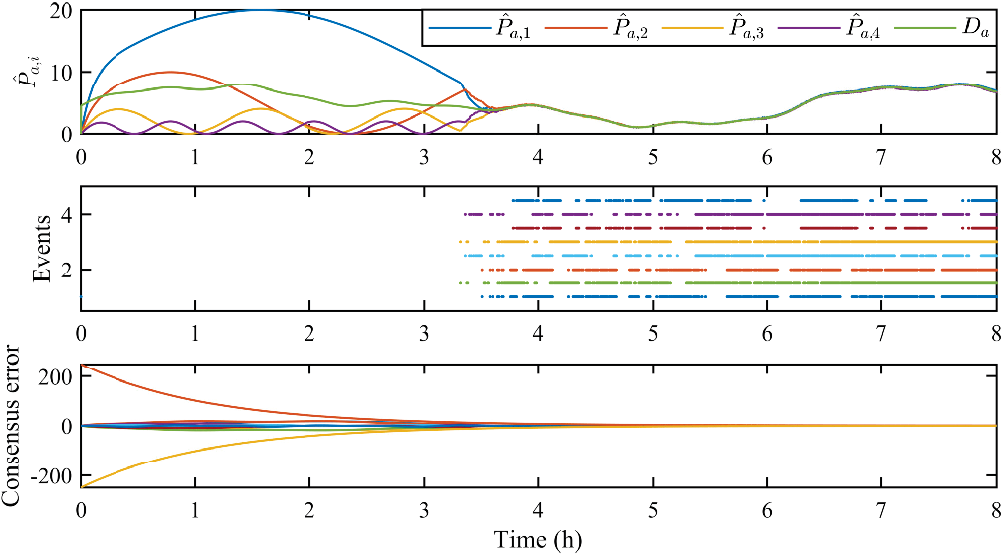}
		\caption{Evolution of power sharing estimation, events, and consensus error in Case 1.}
		\label{case1b}
	\end{figure}
	\begin{figure}
		\centering
		\includegraphics[width=8cm]{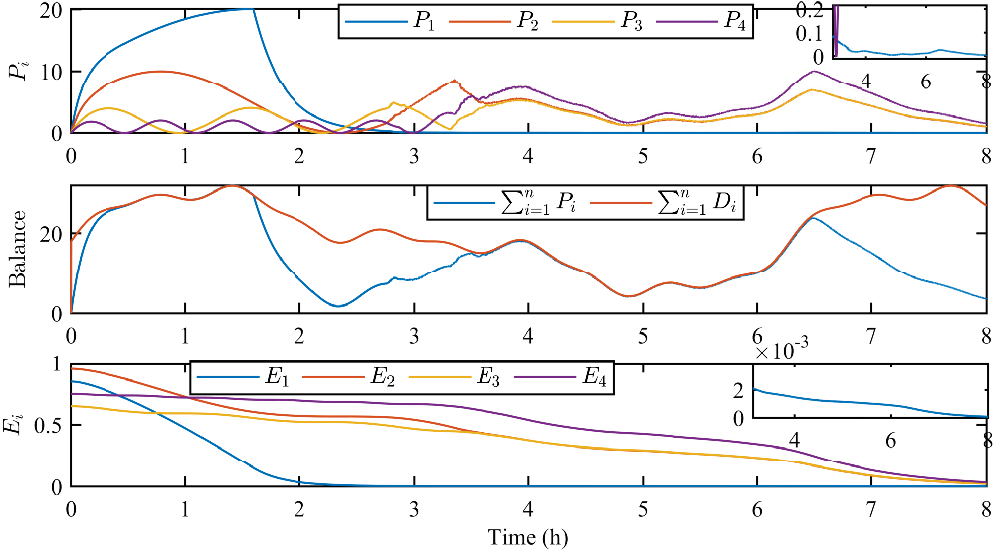}
		\caption{Evolution of output power, supply-demand balance, and SoC in Case 1.}
		\label{case1c}
	\end{figure}
	\subsection{Case 1. The {Test on the Designed Scheme with the} Event-triggering Mechanism {\eqref{10}}}
	\par Under the event triggering mechanism \eqref{10} \textcolor{blue}{and} the designed schemes \eqref{9a}-\eqref{9d}, the simulation results are shown in \textcolor{blue}{Figs.} \ref{case1a}-\ref{case1c}. {Among of them,} Fig. \ref{case1a} demonstrates events per edge, consensus error, and the evolution of {power} sharing state estimate. SoC balance state estimate, consensus error, and its events per edge are shown in Fig. \ref{case1b}. The evolution of the SoC and output power of each BESS, as well as the supply-demand balance of the entire network, is shown in Fig. \ref{case1c}.
	\par Overall, from \textcolor{blue}{Figs.} \ref{case1a} and \ref{case1b}, the {power} sharing and SoC balance states can be well estimated by each agent. It is worth mentioning that {for BESS $i$}, $\hat{P}_{\mathrm{a},i}$ and $\hat{E}_{\mathrm{a},i}$ {are} not synchronously transmitted in the communication network, {which} is drawn from the events per edge of two estimated variables. However, both $\hat{P}_{\mathrm{a},i}$ and $\hat{E}_{\mathrm{a},i}$ behave oscillations. This bad behavior of $\hat{E}_{\mathrm{a},i}$ is particularly severe. {Besides,} whether it is the estimated power average state or the estimated SoC average state, the consensus error can {always evolve within the predefined region and converge at a rate not less than the predefined rate}.
	\par {From Fig. \ref{case1c}, it can be easily observed that 
		the supply-demand balance can be well maintained between approximately $t=0.5h$ and $t=1.5h$. This is because there is no communication between agents, and ${\hat E}_{a,i}$ gradually converges to $E_i$, while ${\hat P}_{a,i}$ gradually converges to $P_i$. According to \eqref{Pcal}, at this point, $P_i$ gradually converges to $D_i$, thereby ensuring supply-demand balance. From $t=1.5h$ to $t=3h$, BESS 1 is about to run out according to $E_1$, causing an imbalance between supply and demand. Subsequently, as communication resumed, the SoC balance state and power sharing state are well estimated, thereby ensuring supply-demand balance. At approximately $t=6.5h$, the supply-demand imbalance occurs again, due to the low energy level of BESSs in this network, resulting in insufficient power supply. Although BESS 1 is about to run out from $t=1.5h$ to $t=3h$, according to $E_1$, the SoC of BESS 1 has not converged to 0 and is still continuously decreasing. Anyway, SoC of all BESSs can ultimately converge to 0 simultaneously.}
	\subsection{Case 2. Comparison with Other Methods}
	\begin{figure}
	\centering
	\includegraphics[width=8cm]{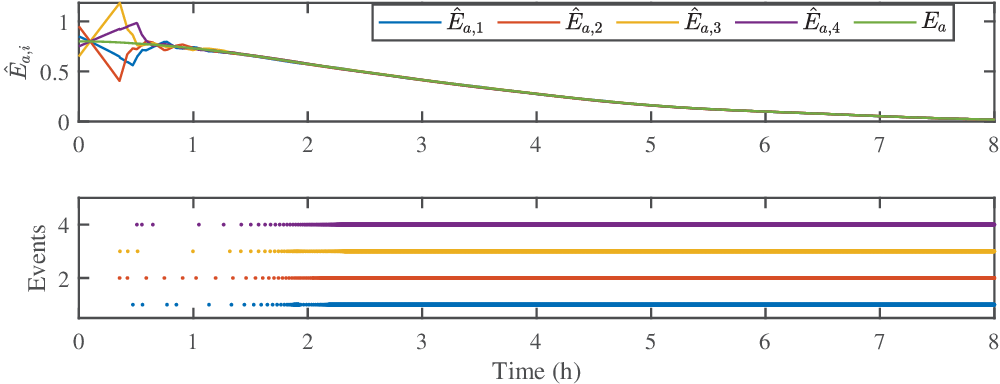}
	\caption{Evolution of SoC balance estimation and events under the scheme in \cite{Qian2023} in Case 2.}
	\label{case2a}
\end{figure}
\begin{figure}
	\centering
	\includegraphics[width=8cm]{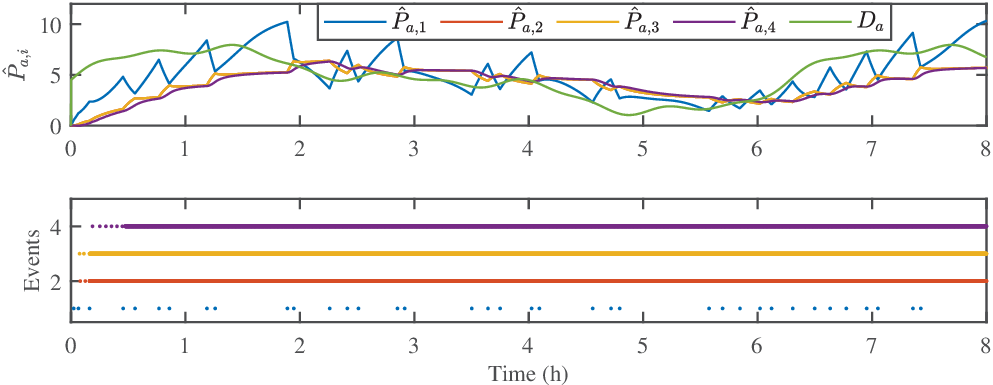}
	\caption{Evolution of power sharing estimation and events under the scheme in \cite{Qian2023} in Case 2.}
	\label{case2b}
\end{figure}
\begin{figure}
	\centering
	\includegraphics[width=8cm]{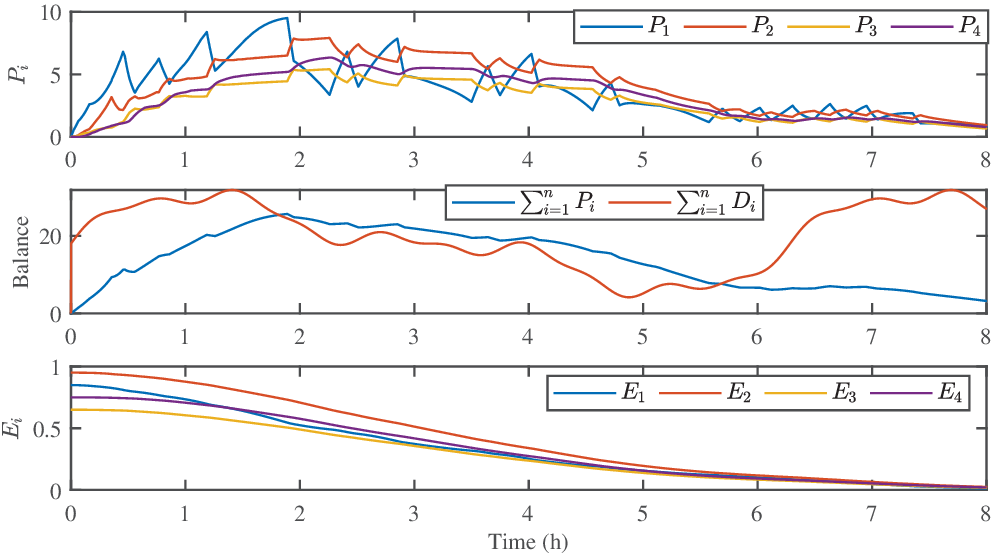}
	\caption{Evolution of output power and supply-demand balance under the scheme in \cite{Qian2023} in Case 2.}
	\label{case2c}
\end{figure}
	To illustrate the progressiveness, the effect of the distributed event-triggered estimators designed in \cite{Qian2023} \textcolor{blue}{is used for comparison}. From the perspective of the {power} sharing estimator, a centralized {method} needs to be designed to obtain the average load $P_a$ \textcolor{blue}{of} the entire network. Compared to this, the scheme adopted in this paper is more distributed. Besides, the average tracking performance and consensus performance of the scheme adopted are decoupled and regulated by different parameters. Specifically, increasing $\lambda$ and $\mathrm{k}$ leads to faster convergence speed, while reducing $\rho_\infty$ and increasing $\mathrm{k}_\mathrm{r}$ can improve control accuracy and average tracking performance, respectively.
	\par Then, the simulation results of this two estimators {designed} in \cite{Qian2023} are shown in \textcolor{blue}{Figs.} \ref{case2a}-\ref{case2c}. For \textcolor{blue}{the} SoC balance {state estimator}, the scheme proposed by the authors in \cite{Qian2023} performs better than the scheme designed in this paper {at the cost of triggering more and more frequently. Besides}, the {power} sharing estimator in \cite{Qian2023} do not perform as satisfactorily. To be frank, this estimator does not seem to have ideal tracking performance for rapidly changing loads, as there are some significant deviations in the tracking {and requires a faster triggering frequency}. In addition, there is always a significant deviation between supply and demand.
	\subsection{Case 3. The Test of {Play-and-Plug}}
		\begin{figure}
		\centering
		\includegraphics[width=8cm]{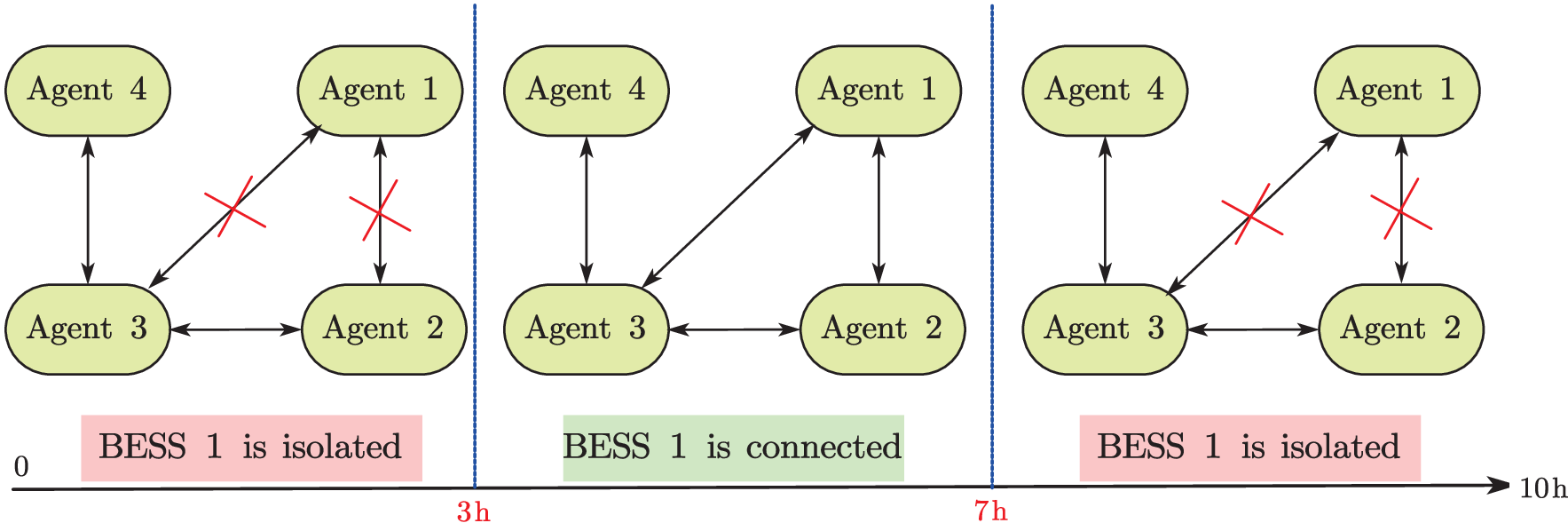}
		\caption{Simulation events in Case 3.}
		\label{case3}
	\end{figure}
	\begin{figure}
		\centering
		\includegraphics[width=8cm]{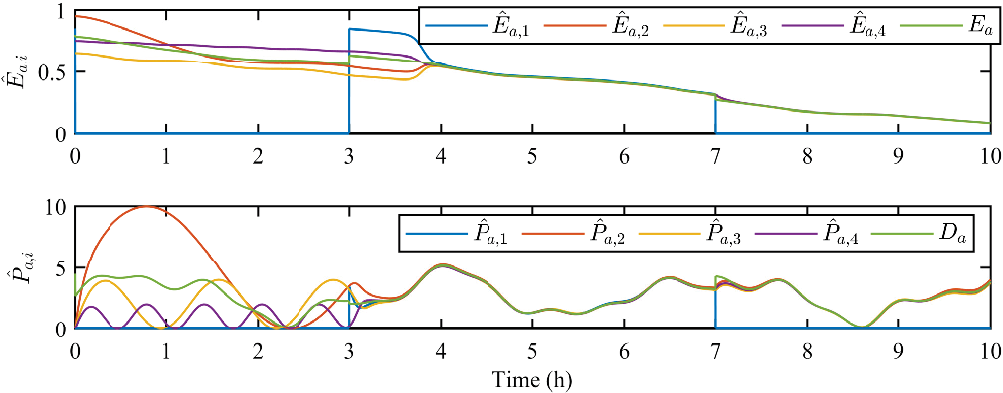}
			\caption{Evolution of SoC balance estimation and power sharing estimation in Case 3.}
		\label{case3a}
	\end{figure}
	\begin{figure}
		\centering
		\includegraphics[width=8cm]{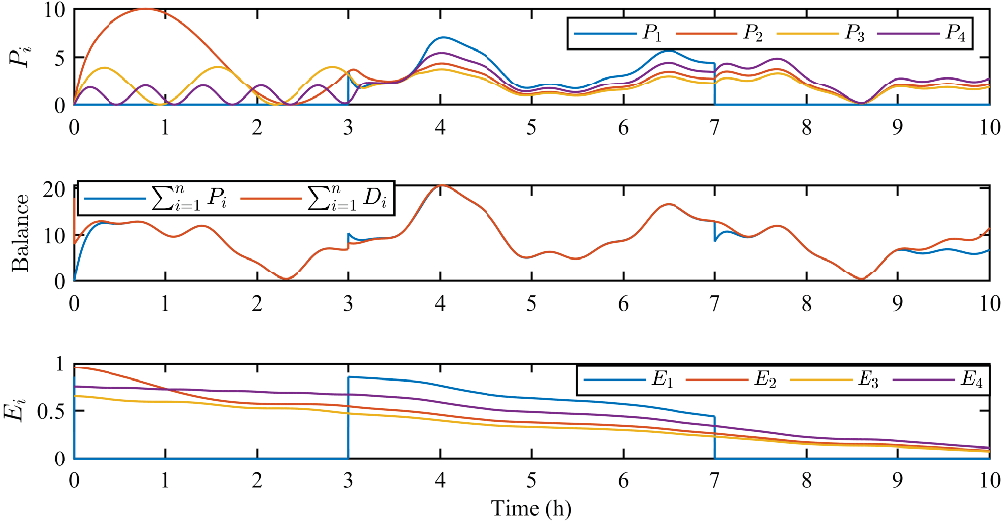}
		\caption{Evolution of output power, supply-demand balance, and SoC in Case 3.}
		\label{case3b}
	\end{figure}
	\par During normal operation of a BESS network, a bus may be isolated due to accidents or plans. Therefore, the robustness challenges brought about by communication failures should be included in the scheme design {to test the Play-and-Plug function}. In view of this, {the part where BESS 1 is located is removed at $t=0$ and $t=7h$, and restored at $t=3h$. Correspondingly, agent 1 encounters a communication failure at $t=0$ and $t=7h$, but resumes communication at $t=3h$. These events in this simulation can \textcolor{blue}{be referred} to in Fig. \ref{case3}.} The simulation results are shown in \textcolor{blue}{Figs.} \ref{case3a} and \ref{case3b}.
	\par When {agent} 1 encounters a communication failure, the remaining part remains connected, resulting in the situation shown in Fig. \ref{case3a}. Specifically, the SoC balance {state} and {power} sharing {state of the remaining part are} well estimated. Similarly, the remaining BESSs are ultimately almost simultaneously depleted, as shown in Fig. \ref{case3b}. Moreover, the supply-demand balance of the remaining {part} can still be well maintained. In summary, the designed scheme has good robustness {and Plug-and-Play function}.
	\subsection{Case 4 The Test on Scalability of the Designed Algorithm}
	\par {To test whether the designed algorithm allows for more BESSs, the number of BESSs in Fig. \ref{comm} is increased to 12. Under the designed distributed protocol \eqref{9a}-\eqref{9d} with event triggering mechanism \eqref{10}, the simulation results are shown in \textcolor{blue}{Figs.} \ref{case5a}-\ref{case5c}.}
	\par {From the simulation results, as shown in Fig. \ref{case5a}, all SoC balance estimation values can track the real-time SoC average value well, and the consensus error does not exceed the preset value. Similar conclusions \textcolor{blue}{are drawn on} power sharing estimators, as shown in Fig. \ref{case5b}. From Fig. \ref{case5c}, it can be seen that the supply-demand balance can also be well maintained, and the SoC of each BESS can ultimately converge to 0 simultaneously. It can be seen from here that the designed algorithm has a certain degree of scalability.}
	\begin{figure}
		\centering
		\includegraphics[width=8cm]{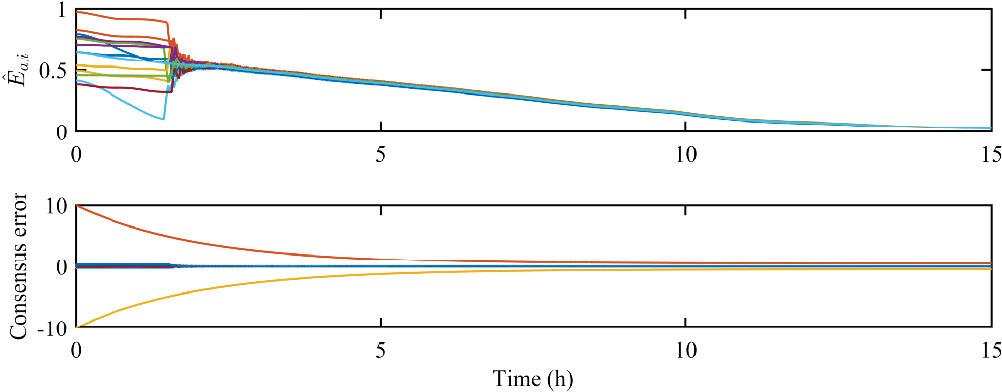}
		\caption{Evolution of SoC balance estimation and consensus error in Case 4.}
		\label{case5a}
	\end{figure}
	\begin{figure}
		\centering
		\includegraphics[width=8cm]{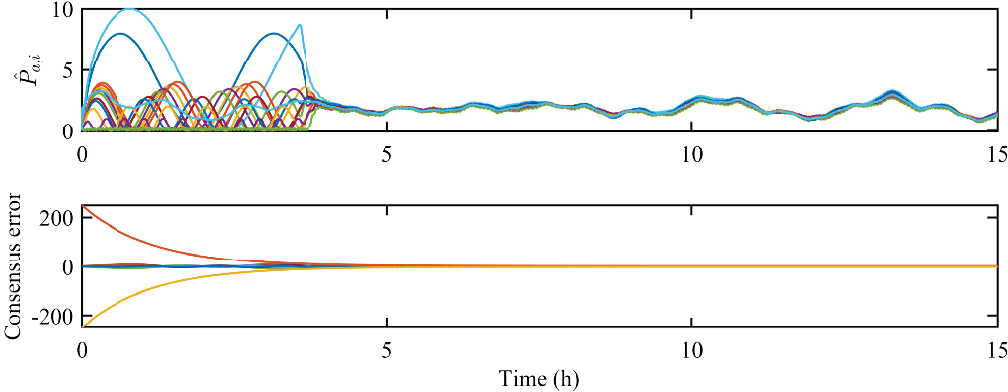}
		\caption{Evolution of power sharing estimation and consensus error in Case 4.}
		\label{case5b}
	\end{figure}
	\begin{figure}
		\centering
		\includegraphics[width=8cm]{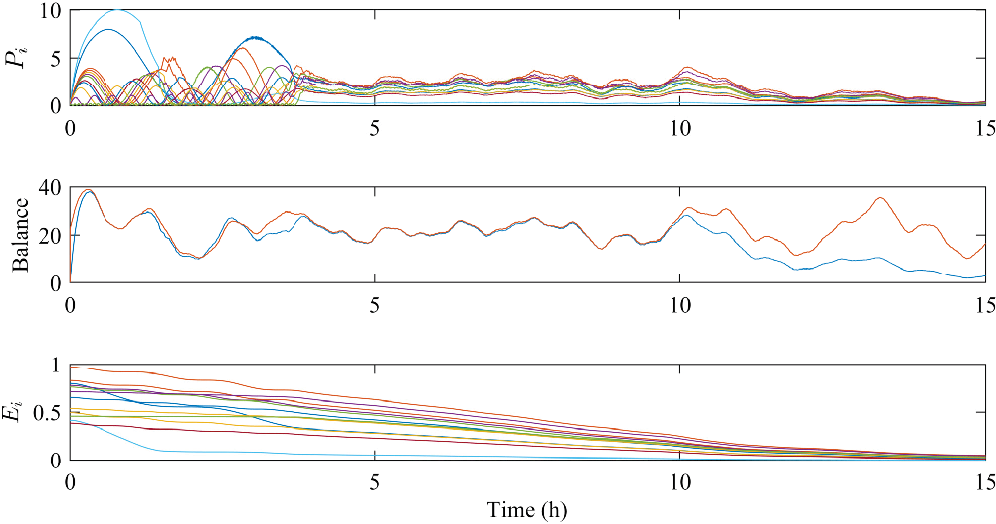}
		\caption{Evolution of output power, supply-demand balance, and SoC in Case 4.}
		\label{case5c}
	\end{figure}
		\subsection{Case 5. The Test with the Self-triggering Mechanism}
	\par To verify the effectiveness of the designed self triggering scheme, {a} simulation is conducted on \eqref{19} based on Case 1. The {simulation} results are shown in \textcolor{blue}{Figs.} \ref{case4a}-\ref{case4c}.
	\par From \textcolor{blue}{Figs.} \ref{case4a} and\ref{case4b}, it can be seen that each agent can still estimate the {power} sharing and SoC balance states with almost zero error. Meanwhile, the two errors can still converge to the desired range at a rate not less than the predefined {one}. Besides, compared to Case 1, both states no longer exhibit local oscillations. But the corresponding cost is to trigger more frequently, as described in Remark \ref{rem2}. \textcolor{blue}{From} Fig. \ref{case4c}, the supply-demand balance can still be well maintained{, and all BESSs are almost simultaneously depleted.} 
	\begin{figure}
		\centering
		\includegraphics[width=8cm]{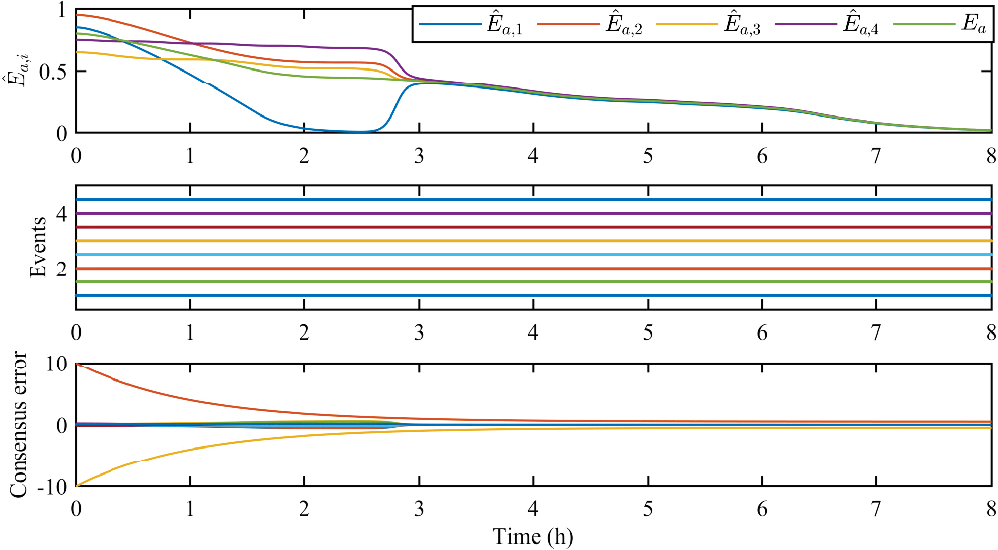}
		\caption{Evolution of SoC balance estimation, events, and consensus error in Case 5.}
		\label{case4a}
	\end{figure}
	\begin{figure}
		\centering
		\includegraphics[width=8cm]{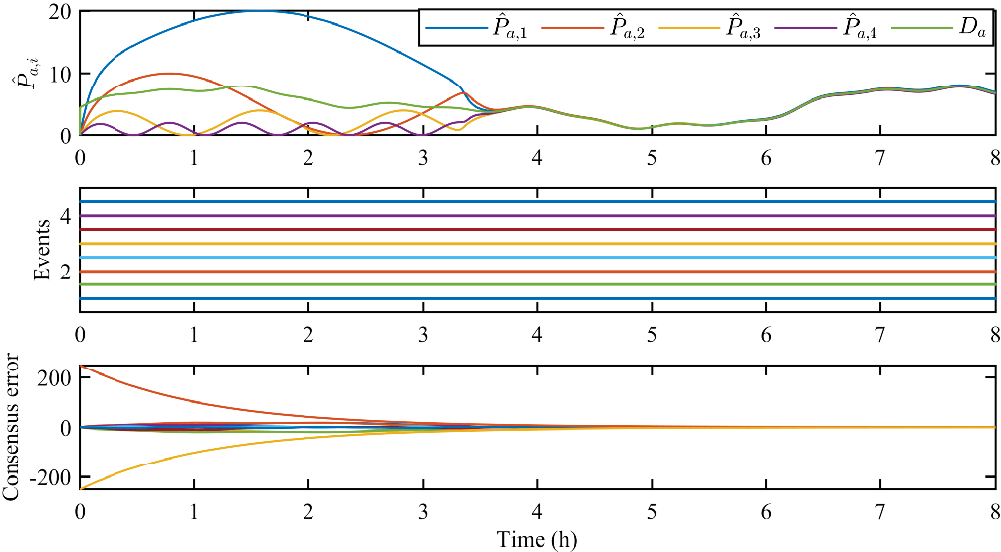}
		\caption{Evolution of power sharing estimation, events, and consensus error in Case 5.}
		\label{case4b}
	\end{figure}
	\begin{figure}
		\centering
		\includegraphics[width=8cm]{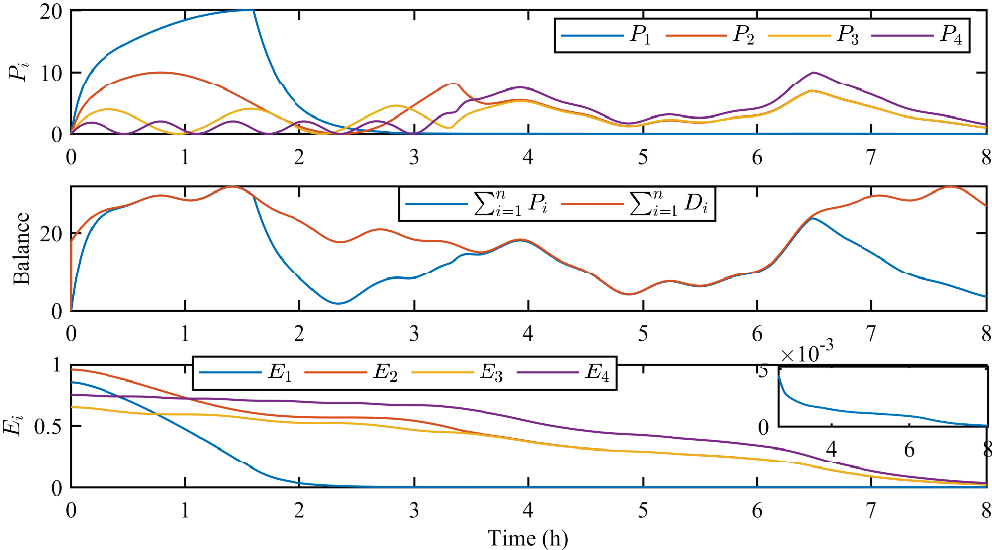}
		\caption{Evolution of output power, supply-demand balance, and SoC in Case 5.}
		\label{case4c}
	\end{figure}
	\section{Conclusions}
	\par In this paper, two distributed estimators with prescribed performance under intermittent {dynamics} are developed such that each agent managing a BESS grasps the current power sharing state and SoC balance state. This is done in order to ensure that any one of BESSs in a group with the same relative SoC change rate does not exit prematurely. The introduction of the PPC method completely decouples consensus and dynamic tracking performance, which is beneficial for the selection of gains. {An event triggering mechanism and a self triggering mechanism are designed to save actuators and communication resources.} The simulation results show that the designed scheme not only outperforms existing schemes in terms of consensus and dynamic tracking performance, but also has robustness and scalability, which are suitable for the needs of wide area microgrids. 
	\par {It should be pointed out that the designed self triggering scheme adopts the upper bound of consensus error, resulting in more frequent triggering. Perhaps a suitable observer is needed to observe the upper limit of consensus error in real time to improve the self triggering mechanism and reduce the triggering frequency.} \textcolor{blue}{In addition, the PPC method used is too common and can be somewhat challenging in dealing with situations such as actuator saturation and disturbance in multi-agent systems. Therefore, introducing a non-fragile PPC method \cite{9930639,BJF2022} into distributed DAT algorithms will be a promising and urgent problem to be solved.}
	\appendices
	\ifCLASSOPTIONcaptionsoff
	\newpage
	\fi
	
	\bibliographystyle{IEEEtran}
	\bibliography{re0.bib}

@article{HossainLipu2022,
	title = {A Review of Controllers and Optimizations Based Scheduling Operation for Battery Energy Storage System towards Decarbonization in Microgrid: Challenges and Future Directions},
	shorttitle = {A Review of Controllers and Optimizations Based Scheduling Operation for Battery Energy Storage System towards Decarbonization in Microgrid},
	author = {Hossain Lipu, M. S. and Ansari, Shaheer and Miah, Md. Sazal and Hasan, Kamrul and Meraj, Sheikh T. and Faisal, M. and Jamal, Taskin and Ali, Sawal H. M. and Hussain, Aini and Muttaqi, Kashem M. and Hannan, M. A.},
	year = {2022},
	journal = {Journal of Cleaner Production},
	shortjournal = {Journal of Cleaner Production},
	volume = {360},
	pages = {132188},
	issn = {0959-6526},
	doi = {10.1016/j.jclepro.2022.132188}
}

@article{ZHAO2014538,
	title = {Review on the costs and benefits of renewable energy power subsidy in China},
	author = {Zhao, Huiru and Guo, Sen and Fu, Liwen},
	journal = {Renewable and Sustainable Energy Reviews},
	volume = {37},
	pages = {538-549},
	year = {2014},
	issn = {1364-0321},
	doi = {https://doi.org/10.1016/j.rser.2014.05.061}
}

@article{solyaliComprehensiveStateoftheartReview2022,
	title = {A Comprehensive State-of-the-Art Review of Electrochemical Battery Storage Systems for Power Grids},
	author = {Solyali, Davut and Safaei, Babak and Zargar, Omid and Aytac, Gizem},
	year = {2022},
	journal = {International Journal of Energy Research},
	volume = {46},
	number = {13},
	pages = {17786--17812},
	issn = {1099-114X},
	doi = {10.1002/er.8451},
	urldate = {2023-03-13},
	langid = {english}
}

@article{Zeng2022l,
	title = {Hierarchical Cooperative Control Strategy for Battery Storage System in Islanded DC Microgrid},
	author = {Zeng, Yuji and Zhang, Qinjin and Liu, Yancheng and Zhuang, Xuzhou and Guo, Haohao},
	year = {2022},
	journal = {IEEE Transactions on Power Systems},
	volume = {37},
	number = {5},
	pages = {4028--4039},
	issn = {1558-0679},
	doi = {10.1109/TPWRS.2021.3131591}
}

@article{Wang2020d,
	title = {Consensus-Based Control of Hybrid Energy Storage System With a Cascaded Multiport Converter in DC Microgrids},
	author = {Wang, Benfei and Wang, Yu and Xu, Yan and Zhang, Xinan and Gooi, Hoay Beng and Ukil, Abhisek and Tan, Xiaojun},
	YEAR = {2020},
	journal = {IEEE Transactions on Sustainable Energy},
	volume = {11},
	number = {4},
	pages = {2356--2366},
	issn = {1949-3037},
	doi = {10.1109/TSTE.2019.2956054}
}

@article{Litjens2018,
	title = {Economic Benefits of Combining Self-Consumption Enhancement with Frequency Restoration Reserves Provision by Photovoltaic-Battery Systems},
	author = {Litjens, G. B. M. A. and Worrell, E. and van Sark, W. G. J. H. M.},
	year = {2018},
	journal = {Applied Energy},
	shortjournal = {Appl. Energy},
	volume = {223},
	pages = {172--187},
	publisher = {Elsevier Sci Ltd},
	location = {Oxford},
	issn = {0306-2619},
	doi = {10.1016/j.apenergy.2018.04.018}
}

@article{Yang2022,
	title = {Modelling and Optimal Energy Management for Battery Energy Storage Systems in Renewable Energy Systems: A Review},
	shorttitle = {Modelling and Optimal Energy Management for Battery Energy Storage Systems in Renewable Energy Systems},
	author = {Yang, Yuqing and Bremner, Stephen and Menictas, Chris and Kay, Merlinde},
	year = {2022},
	journal = {Renewable and Sustainable Energy Reviews},
	shortjournal = {Renewable and Sustainable Energy Reviews},
	volume = {167},
	pages = {112671},
	issn = {1364-0321},
	doi = {10.1016/j.rser.2022.112671}
}

@inproceedings{Shinichi2017,
	title = {Demonstration Project of Power System Stabilization with the Hybrid Battery Energy Storage System},
	booktitle = {2017 IEEE International Telecommunications Energy Conference (Intelec)},
	author = {Shinichi, Sano and Shimoura, Ichiro},
	year = {2017},
	pages = {262--269},
	location = {New York},
	issn = {0275-0473}
}

@inproceedings{Sokol2017,
	title = {The Specificity of Electrical Energy Storage Unit Application},
	booktitle = {2017 IEEE First Ukraine Conference on Electrical and Computer Engineering (Ukrcon)},
	author = {Sokol, Evgen and Zamaruiev, Volodymyr and Kryvosheev, Serhiy and Styslo, Bohdan and Makarov, Vadym},
	pages = {432--435},
	year = {2017},
	location = {New York}
}

@article{Baveja2023,
	title = {Predicting Temperature Distribution of Passively Balanced Battery Module under Realistic Driving Conditions through Coupled Equivalent Circuit Method and Lumped Heat Dissipation Method},
	author = {Baveja, Rajveer and Bhattacharya, Jishnu and Panchal, Satyam and Fraser, Roydon and Fowler, Michael},
	year = {2023},
	journal = {Journal of Energy Storage},
	shortjournal = {Journal of Energy Storage},
	volume = {70},
	pages = {107967},
	issn = {2352-152X},
	doi = {10.1016/j.est.2023.107967}
}

@article{Cao2021,
	title = {Hierarchical SOC Balancing Controller for Battery Energy Storage System},
	author = {Cao, Yuan and Abu Qahouq, Jaber A.},
	year = {2021},
	journal = {IEEE Transactions on Industrial Electronics},
	volume = {68},
	number = {10},
	pages = {9386--9397},
	issn = {1557-9948},
	doi = {10.1109/TIE.2020.3021608}
}

@article{Cai2016,
	title = {Distributed Control Scheme for Package-Level State-of-Charge Balancing of Grid-Connected Battery Energy Storage System},
	author = {Cai, He and Hu, Guoqiang},
	year = {2016},
	journal = {IEEE Transactions on Industrial Informatics},
	volume = {12},
	number = {5},
	pages = {1919--1929},
	issn = {1941-0050},
	doi = {10.1109/TII.2016.2601904}
}

@article{Khazaei2019a,
	title = {Multi-Agent Consensus Design for Heterogeneous Energy Storage Devices With Droop Control in Smart Grids},
	author = {Khazaei, Javad and Nguyen, Dinh Hoa},
	year = {2019},
	journal = {IEEE Transactions on Smart Grid},
	volume = {10},
	number = {2},
	pages = {1395--1404},
	issn = {1949-3061},
	doi = {10.1109/TSG.2017.2765241}
}

@article{Ding2020,
	title = {Distributed Resilient Finite-Time Secondary Control for Heterogeneous Battery Energy Storage Systems Under Denial-of-Service Attacks},
	author = {Ding, Lei and Han, Qing-Long and Ning, Boda and Yue, Dong},
	year = {2020},
	journal = {IEEE Transactions on Industrial Informatics},
	volume = {16},
	number = {7},
	pages = {4909--4919},
	issn = {1941-0050},
	doi = {10.1109/TII.2019.2955739}
}

@article{Hu2019,
	title = {Distributed Finite-Time Consensus Control for Heterogeneous Battery Energy Storage Systems in Droop-Controlled Microgrids},
	author = {Hu, Junyan and Lanzon, Alexander},
	year = {2019},
	journal = {IEEE Transactions on Smart Grid},
	volume = {10},
	number = {5},
	pages = {4751--4761},
	issn = {1949-3061},
	doi = {10.1109/TSG.2018.2868112}
}

@article{Nguyen2021a,
	title = {Unified Distributed Control of Battery Storage With Various Primary Control in Power Systems},
	author = {Nguyen, Dinh Hoa and Khazaei, Javad},
	year = {2021},
	journal = {IEEE Transactions on Sustainable Energy},
	volume = {12},
	number = {4},
	pages = {2332--2341},
	issn = {1949-3037},
	doi = {10.1109/TSTE.2021.3091976}
}

@article{Zhang2020,
	title = {A Distributed Step-by-Step Finite-Time Consensus Design for Heterogeneous Battery Energy Storage Devices with Droop Control},
	author = {Zhang, Yalin and Song, Yunzhong and Fei, Shumin},
	YEAR = {2023},
	month=sep,
	volume = {9},
	number = {5},
	journal = {CSEE Journal of Power and Energy Systems},
	pages = {1893-1904},
	issn = {2096-0042},
	doi = {10.17775/CSEEJPES.2020.00030}
}

@article{Hong2021d,
	title = {A Novel Multi-Agent Model-Free Control for State-of-Charge Balancing Between Distributed Battery Energy Storage Systems},
	author = {Hong, Yujin and Xu, Dezhi and Yang, Weilin and Jiang, Bin and Yan, Xing-Gang},
	year = {2021},
	journal = {IEEE Transactions on Emerging Topics in Computational Intelligence},
	shortjournal = {IEEE Trans. Emerg. Top. Comput. Intell.},
	volume = {5},
	number = {4},
	pages = {679--688},
	publisher = {Ieee-Inst Electrical Electronics Engineers Inc},
	location = {Piscataway},
	issn = {2471-285X},
	doi = {10.1109/TETCI.2020.2978434}
}

@article{Zhang2019e,
	title = {Nonlinear Sliding Mode and Distributed Control of Battery Energy Storage and Photovoltaic Systems in AC Microgrids With Communication Delays},
	author = {Zhang, Runfan and Hredzak, Branislav},
	YEAR = {2019},
	journal = {IEEE Transactions on Industrial Informatics},
	volume = {15},
	number = {9},
	pages = {5149--5160},
	issn = {1941-0050},
	doi = {10.1109/TII.2019.2896032}
}

@article{Xing2019,
	title = {Distributed State-of-Charge Balance Control With Event-Triggered Signal Transmissions for Multiple Energy Storage Systems in Smart Grid},
	author = {Xing, Lantao and Mishra, Yateendra and Tian, Yu-Chu and Ledwich, Gerard and Zhou, Chunjie and Du, Wenli and Qian, Feng},
	year = {2019},
	journal = {IEEE Transactions on Systems, Man, and Cybernetics: Systems},
	volume = {49},
	number = {8},
	pages = {1601--1611},
	issn = {2168-2232},
	doi = {10.1109/TSMC.2019.2916152}
}

@article{Meng2021,
	title = {Distributed Cooperative Control of Battery Energy Storage Systems in DC Microgrids},
	author = {Meng, Tingyang and Lin, Zongli and Shamash, Yacov A.},
	year = {2021},
	journal = {IEEE/CAA Journal of Automatica Sinica},
	volume = {8},
	number = {3},
	pages = {606--616},
	issn = {2329-9274},
	doi = {10.1109/JAS.2021.1003874}
}

@article{Meng2022,
	title = {State-of-Charge Balancing for Battery Energy Storage Systems in DC Microgrids by Distributed Adaptive Power Distribution},
	author = {Meng, Tingyang and Lin, Zongli and Wan, Yan and Shamash, Yacov A.},
	year = {2022},
	journal = {IEEE Control Systems Letters},
	volume = {6},
	pages = {512--517},
	issn = {2475-1456},
	doi = {10.1109/LCSYS.2021.3082103}
}

@article{Qian2023,
	title = {Distributed event-triggered algorithms for the management of networked battery systems},
	author = {Qian, Yangyang and Meng, Tingyang and Lin, Zongli and Wan, Yan and Shamash, Yacov A.},
	year = {2023},
	journal = {International Journal of Robust and Nonlinear Control},
	volume = {},
	pages = {1--26}
}

@article{Wu2022,
	title = {Distributed Multirate Control of Battery Energy Storage Systems for Power Allocation},
	author = {Wu, Han and Chai, Li and Tian, Yu-Chu},
	year = {2022},
	journal = {IEEE Transactions on Industrial Informatics},
	volume = {18},
	number = {12},
	pages = {8745--8754},
	issn = {1941-0050},
	doi = {10.1109/TII.2022.3153055}
}

@article{Bechlioulis2008,
	title = {Robust Adaptive Control of Feedback Linearizable MIMO Nonlinear Systems With Prescribed Performance},
	author = {Bechlioulis, Charalampos P. and Rovithakis, George A.},
	year = {2008},
	journal = {IEEE Transactions on Automatic Control},
	volume = {53},
	number = {9},
	pages = {2090--2099},
	issn = {1558-2523},
	doi = {10.1109/TAC.2008.929402}
}

@article{Bechlioulis2009,
	title = {Adaptive Control with Guaranteed Transient and Steady State Tracking Error Bounds for Strict Feedback Systems},
	author = {Bechlioulis, Charalampos P. and \vspace{0mm} Rovithakis, George A.},
	year = {2009},	
	journal = {Automatica},	
	shortjournal = {Automatica},	
	volume = {45},
	number = {2},	
	pages = {532--538},	
	issn = {0005-1098},	
	doi = {10.1016/j.automatica.2008.08.012}
}

@article{Bechlioulis2010,
	title = {Prescribed Performance Adaptive Control for Multi-Input Multi-Output Affine in the Control Nonlinear Systems},
	author = {Bechlioulis, Charalampos P. and \vspace{0mm} \vspace{0mm} Rovithakis, George A.},
	year = {2010},
	journal = {IEEE Transactions on Automatic Control},
	volume = {55},
	number = {5},
	pages = {1220--1226},
	issn = {1558-2523},
	doi = {10.1109/TAC.2010.2042508}
}

@ARTICLE{10101860,
	author={Lv, Maolong and Wang, Ning},
	journal={IEEE Transactions on Automatic Control}, 
	title={Distributed Control for Uncertain Multiagent Systems With the Powers of Positive-Odd Numbers: A Low-Complexity Design Approach}, 
	year={2024},
	month=jan,
	volume={69},
	number={1},
	pages={434-441},
	doi={10.1109/TAC.2023.3266986}
}

@ARTICLE{10198705,
	author={Ma, Chi and Dong, Dianbiao},
	journal={IEEE/CAA Journal of Automatica Sinica}, 
	title={Finite-Time Prescribed Performance Time-Varying Formation Control for Second-Order Multi-Agent Systems with Non-Strict Feedback Based on a Neural Network Observer}, 
	year={2024},
	month=apr,
	volume={11},
	number={4},
	pages={1039-1050},
	doi={10.1109/JAS.2023.123615}
}

@ARTICLE{10486878,
	author={Liu, Donghao and Mao, Zehui and Jiang, Bin and Yan, Xing-Gang},
	journal={IEEE Transactions on Cybernetics}, 
	title={Prescribed Performance Fault-Tolerant Control for Synchronization of Heterogeneous Nonlinear MASs Using Reinforcement Learning}, 
	year={2024},
	month=sep,
	volume={54},
	number={9},
	pages={5451-5462},
	doi={10.1109/TCYB.2024.3374349}
}

@ARTICLE{10428061,
	author={Hu, Wenfeng and Hou, Yahui and Chen, Zhiyong and Yang, Chunhua and Gui, Weihua},
	journal={IEEE Transactions on Automatic Control}, 
	title={Event-Triggered Consensus of Multiagent Systems With Prescribed Performance}, 
	year={2024},
	month=aug,
	volume={69},
	number={8},
	pages={5462-5469},
	doi={10.1109/TAC.2024.3364017}
}

@ARTICLE{10578027,
	author={Hou, Yahui and Hu, Wenfeng and Li, Jianqi and Huang, Tingwen},
	journal={IEEE Transactions on Circuits and Systems I: Regular Papers}, 
	title={Prescribed Performance Control for Double-Integrator Multi-Agent Systems: A Unified Event-Triggered Consensus Framework}, 
	year={2024},
	month=sep,
	volume={71},
	number={9},
	pages={4222-4232},
	doi={10.1109/TCSI.2024.3416400}
}

@ARTICLE{10418189,
	author={Bi, Wenshan and Zhang, Chen and Sui, Shuai and Tong, Shaocheng and Chen, C. L. Philip},
	journal={IEEE Transactions on Fuzzy Systems}, 
	title={Fixed-Time Fuzzy Adaptive Bipartite Output Consensus Tracking for Nonlinear Coopetition MASs}, 
	year={2024},
	month=may,
	volume={32},
	number={5},
	pages={2775-2785},
	doi={10.1109/TFUZZ.2024.3360283}
}

@ARTICLE{10398255,
	author={Yang, Tingting and Dong, Jiuxiang},
	journal={IEEE Transactions on Systems, Man, and Cybernetics: Systems}, 
	title={Distributed Event-Triggered Fixed-Time DSC of Multiagent Systems}, 
	year={2024},
	month=apr,
	volume={54},
	number={4},
	pages={2484-2494},
	doi={10.1109/TSMC.2023.3344269}
}

@ARTICLE{10711307,
	author={Zhang, Dun and Lam, James and Xie, Xiaochen and Fan, Chenchen and Song, Xiaoqi},
	journal={IEEE Transactions on Cybernetics}, 
	title={Fault-Tolerant Consensus of Multiagent Systems With Prescribed Performance}, 
	year={2024},
	volume={},
	number={},
	pages={1-14},
	doi={10.1109/TCYB.2024.3467217}
}

@book{Sontag1998,
	title = {Mathematical Control Theory},
	author = {Sontag, Eduardo D.},
	editorb = {Marsden, J. E. and Sirovich, L. and Golubitsky, M. and Jäger, W.},
	editorbtype = {redactor},
	year = {1998},
	publisher = {Springer},
	location = {New York, NY},
	doi = {10.1007/978-1-4612-0577-7}
}

@ARTICLE{9770469,
	author={Zhang, Juan and Zhang, Huaguang and Sun, Shaoxin and Cai, Yuliang},
	journal={IEEE Transactions on Cybernetics}, 
	title={Adaptive Time-Varying Formation Tracking Control for Multiagent Systems With Nonzero Leader Input by Intermittent Communications}, 
	year={2023},
	month=sep,
	volume={53},
	number={9},
	pages={5706-5715},
	doi={10.1109/TCYB.2022.3165212}
}

@ARTICLE{9930639,
	author={Bu, Xiangwei and Jiang, Baoxu and Lei, Humin},
	journal={IEEE Transactions on Fuzzy Systems}, 
	title={Low-Complexity Fuzzy Neural Control of Constrained Waverider Vehicles via Fragility-Free Prescribed Performance Approach}, 
	year={2023},
	month=jul,
	volume={31},
	number={7},
	pages={2127-2139},
	doi={10.1109/TFUZZ.2022.3217378}
}

@ARTICLE{BJF2022,
	author={Bu, Xiangwei and Jiang, Baoxu and Feng, Yin'an},
	journal={Nonlinear Dynamics}, 
	title={Non-fragile tracking control of constrained Waverider Vehicles with readjusting prescribed performance}, 
	year={2022},
	month=apr,
	volume={108},
	pages={3657–3669},
	doi={10.1007/s11071-022-07430-6}
}
	\vspace{-15 mm} 
\end{document}